\documentclass[journal]{IEEEtran}

\usepackage{cite}
\usepackage{amsmath,amssymb,amsfonts}
\usepackage{algorithmic}

\ifCLASSINFOpdf
  \usepackage[pdftex]{graphicx}
  \graphicspath{{Figures/}}
 \DeclareGraphicsExtensions{.pdf,.jpeg,.png}
\else
   \usepackage[dvips]{graphicx}
\fi

\usepackage{amsthm}
\usepackage{epsfig}
\usepackage{color}
\usepackage{lettrine}
\usepackage{float}
\usepackage{lipsum}
\usepackage{bm}
\usepackage{mathtools}
\usepackage{bbm}
\usepackage{suffix}
\usepackage[T1]{fontenc}
\usepackage[usenames,dvipsnames]{xcolor}
\definecolor{darkpastelgreen}{rgb}{0.01, 0.75, 0.24}
\usepackage[colorlinks=true,citecolor = darkpastelgreen]{hyperref} 
\usepackage[nameinlink,noabbrev]{cleveref}
\usepackage{textcomp}
\usepackage{stackengine}
\usepackage[ruled,norelsize]{algorithm2e}
\def \treq {\stackrel{\tiny \Delta}{=}}
\usepackage{multicol}
\usepackage{euler}
\usepackage{caption}
\usepackage{subcaption}
\captionsetup[figure]{font=footnotesize}

\newcommand{\Q}{\ensuremath{\mathrm{Q}}}
\newcommand{\Qi}{\ensuremath{\mathrm Q}^{-1}}
\newcommand{\e}{\bar{e}}
\newcommand{\h}{\mathbf{h}}
\renewcommand{\H}{\mathbf{H}}
\newcommand{\C}{\ensuremath{\mathbb C}}
\newcommand{\g}{\mathbf{g}}
\newcommand{\E}{\ensuremath{\mathbb E}}
\renewcommand{\Pr}{\ensuremath{\mathbb P}}
\newcommand{\R}{\ensuremath{\mathbb R}}

\newcommand{\CN}{\ensuremath{\mathcal {CN}}}
\newcommand{\N}{\ensuremath{\mathcal {N}}}
\newcommand{\I}{\ensuremath{\mathbf I}}

\newcommand{\x}{\ensuremath{\mathbf x}}
\newcommand{\w}{\ensuremath{\mathbf w}}
\newcommand{\W}{\ensuremath{\mathbf W}}
\newcommand{\bgamma}{\ensuremath{\bar{\gamma}}}
\newcommand{\bdelta}{\ensuremath{\bar{\delta}}}
\newcommand{\Gam}[2]{\mathcal{G}\left(#1, #2\right)}
\newcommand{\Exp}[1]{\mathcal{E}\left(#1\right)}

\newcommand{\U}{\ensuremath{\mathbf U}}
\renewcommand{\v}{\ensuremath{\mathbf v}}
\newcommand{\X}{\ensuremath{\mathbf X}}

\renewcommand{\b}{\ensuremath{\mathbf b}}
\renewcommand{\o}{\ensuremath{\mathbf 0}}
\newcommand{\Prob}[1]{(\textbf{P}{#1})}
\newcommand{\cst}[1]{\textbf{C}{#1}}
\newcommand{\V}{\ensuremath{\mathbb{V}}}
\newcommand*\diff{\mathop{}\!\mathrm{d}}

\def \treq {\stackrel{\tiny \Delta}{=}}

\newtheorem{prop}{Proposition}
\newtheorem{cor}{Corollary}
\newtheorem{remark}{Remark}
\newtheorem{lemma}{Lemma}

\usepackage{scalerel}
\usepackage{tikz}
\usetikzlibrary{svg.path}

\definecolor{orcidlogocol}{HTML}{A6CE39}
\tikzset{
  orcidlogo/.pic={
    \fill[orcidlogocol] svg{M256,128c0,70.7-57.3,128-128,128C57.3,256,0,198.7,0,128C0,57.3,57.3,0,128,0C198.7,0,256,57.3,256,128z};
    \fill[white] svg{M86.3,186.2H70.9V79.1h15.4v48.4V186.2z}
                 svg{M108.9,79.1h41.6c39.6,0,57,28.3,57,53.6c0,27.5-21.5,53.6-56.8,53.6h-41.8V79.1z M124.3,172.4h24.5c34.9,0,42.9-26.5,42.9-39.7c0-21.5-13.7-39.7-43.7-39.7h-23.7V172.4z}
                 svg{M88.7,56.8c0,5.5-4.5,10.1-10.1,10.1c-5.6,0-10.1-4.6-10.1-10.1c0-5.6,4.5-10.1,10.1-10.1C84.2,46.7,88.7,51.3,88.7,56.8z};
  }
}

\newcommand\orcidicon[1]{\href{https://orcid.org/#1}{\mbox{\scalerel*{
\begin{tikzpicture}[yscale=-1,transform shape]
\pic{orcidlogo};
\end{tikzpicture}
}{|}}}}


\makeatletter
\def\ps@IEEEtitlepagestyle{%
  \def\@oddfoot{\mycopyrightnotice}%
  \def\@oddhead{\hbox{}\@IEEEheaderstyle\leftmark\hfil\thepage}\relax
  \def\@evenhead{\@IEEEheaderstyle\thepage\hfil\leftmark\hbox{}}\relax
  \def\@evenfoot{}%
}
\def\mycopyrightnotice{%
  \begin{minipage}{\textwidth}
  \centering \scriptsize
  \copyright~2024 IEEE. Personal use of this material is permitted.  Permission from IEEE must be obtained for all other uses, in any current or future media, including reprinting/republishing this material for advertising or promotional purposes, creating new collective works, for resale or redistribution to servers or lists, or reuse of any copyrighted component of this work in other works.
  \end{minipage}
}
\makeatother
\begin{document}
\bstctlcite{IEEEexample:BSTcontrol}

    \title{Performance Analysis of Finite Blocklength Transmissions Over Wiretap Fading Channels: An Average Information Leakage Perspective}

\author{Milad Tatar Mamaghani\textsuperscript{\orcidicon{0000-0002-3953-7230}},~\IEEEmembership{Member,~IEEE}, Xiangyun Zhou\textsuperscript{\orcidicon{0000-0001-8973-9079}},~\IEEEmembership{Fellow,~IEEE}, Nan Yang\textsuperscript{\orcidicon{0000-0002-9373-5289}},~\IEEEmembership{Senior Member,~IEEE},
A. Lee Swindlehurst\textsuperscript{\orcidicon{0000-0002-0521-3107}},~\IEEEmembership{Fellow,~IEEE}, and H. Vincent Poor\textsuperscript{\orcidicon{0000-0002-2062-131X}},~\IEEEmembership{Life Fellow,~IEEE}
\thanks{This work was supported in part by the Australian Research Council’s Discovery Projects under Grant DP220101318 and in part by the U.S. National Science Foundation under Grants CNS-2128448 and ECCS-2335876.}
\thanks{Preliminary results of this work have been accepted for presentation at the IEEE International Conference on Communications, 9-13 June 2024, Denver, CO, USA \cite{mamaghani2024icc}.}
\thanks{M. Tatar Mamaghani, X. Zhou, and N. Yang are with the School of Engineering, The Australian National University, Canberra, ACT 2601, Australia (email: \href{mailto:milad.tatarmamaghani@anu.edu.au}{\textcolor{black}{milad.tatarmamaghani@anu.edu.au}}; \href{mailto:xiangyun.zhou@anu.edu.au}{\textcolor{black}{xiangyun.zhou@anu.edu.au}}; \href{mailto:nan.yang@anu.edu.au}{\textcolor{black}{nan.yang@anu.edu.au}}).}
\thanks{A. Lee Swindlehurst is with the Center for Pervasive Communications and
Computing, Henry Samueli School of Engineering, University of California, Irvine, CA 92697, USA (email: \href{mailto:swindle@uci.edu}{\textcolor{black}{swindle@uci.edu}}).}
\thanks{H. Vincent Poor is with the Department
of Electrical and Computer Engineering, Princeton University, Princeton,
NJ 08544, USA (email: \href{mailto:poor@princeton.edu}{\textcolor{black} {poor@princeton.edu}}).}
}

\maketitle

\begin{abstract}
Physical-layer security (PLS) is a promising technique to complement more traditional means of communication security in beyond-5G wireless networks. However, studies of PLS are often based on ideal assumptions such as infinite coding blocklengths or perfect knowledge of the wiretap link's channel state information (CSI). In this work, we study the performance of finite blocklength (FBL) transmissions using a new secrecy metric --- the average information leakage (AIL). We evaluate the exact and approximate AIL with Gaussian signaling and arbitrary fading channels, assuming that the eavesdropper's instantaneous CSI is unknown. We then conduct case studies that use artificial noise (AN) beamforming to analyze the AIL in both Rayleigh and Rician fading channels. The accuracy of the analytical expressions is verified through extensive simulations, and various insights regarding the impact of key system parameters on the AIL are obtained. Particularly, our results reveal that allowing a small level of AIL can potentially lead to significant reliability enhancements. To improve the system performance, we formulate and solve an average secrecy throughput (AST) optimization problem via both non-adaptive and adaptive design strategies. Our findings highlight the significance of blocklength design and AN power allocation, as well as the impact of their trade-off on the AST.
\end{abstract}

\begin{IEEEkeywords}
Beyond-5G communications, physical-layer security, finite blocklength, performance analysis, optimization.
\end{IEEEkeywords}

\IEEEpeerreviewmaketitle

\section{Introduction}
\IEEEPARstart{R}{ecently}, there has been significant interest in new ways of providing  communication security, particularly at the physical layer. Unlike traditional security methods that rely solely on cryptographic algorithms at higher layers of the protocol stack, physical-layer security (PLS) strategies take advantage of the characteristics and inherent randomness of the physical transmission medium itself such as channel impairments, noise, and smart signaling to protect the confidentiality and integrity of wireless transmissions \cite{Poor2017}. PLS can improve wireless communication secrecy by providing an extra layer of resistance against attacks, reducing the reliance on key-based encipherment, and eliminating the need for complex key-exchange protocols. This is particularly important considering the resource restrictions of communication nodes in the Internet of Things (IoT) \cite{Mukherjee2015, Zou2016, Wang2019c}. The development of conventional PLS technologies has typically relied on the concept of secrecy capacity, which refers to the maximum achievable secrecy rate that guarantees both reliability and confidentiality over a wiretap channel. Wyner in his seminal work \cite{Wyner1975} established that by employing appropriate secrecy coding techniques, such as wiretap coding, it is possible to minimize the error probability and information leakage to an arbitrarily low level. This goal is achieved by selecting a data rate below the secrecy capacity and mapping information to asymptotically long codewords that approach infinity in length.

Nevertheless, emerging scenarios in beyond-5G wireless systems such as massive machine-type communication (mMTC) and ultra-reliable low-latency communication (uRLLC) are expected to support new types of data traffic characterized by finite blocklength (FBL) packets. Notably, the size of FBL packets can potentially be as little as tens of bytes compared to several kilobytes in conventional communication systems,  to satisfy the broader communication requirements of diversified applications such as the Internet of Vehicles (IoV) and the Industrial Internet of Things (IIoT) \cite{Akyildiz2020, Zhu2022, Bennis2018, Bockelmann2016}. While leveraging FBL communication minimizes end-to-end transmission latency due to reduced transmission blocklengths or the number of channel uses, it generally comes with a severe reduction in channel coding gain, making it challenging to ensure communication reliability. Furthermore, security issues impose imperative challenges in FBL communication. However, traditional PLS schemes (e.g., see \cite{TatarMamaghani2018, TatarMamaghani2022, mamaghani2021joint} and references therein) are no longer applicable to FBL communication scenarios or lead to significantly sub-optimal designs, due to the assumption of codewords with infinite blocklength (IBL). Thus, FBL transmission has a detrimental impact on both reliability and security, and it is of critical importance to carefully understand and design PLS schemes tailored to the specific requirements of the FBL regime.

\vspace{-3mm}
\subsection{Related Works and Motivation}
Some research has been conducted on quantifying the channel coding rate of FBL communication systems from an information-theoretic perspective. Polyanskiy \textit{et al.} in \cite{Polyanskiy2010} investigated the maximal channel coding rate attainable for a given blocklength and the decoding error probability for general classes of communication channels in the FBL regime. This work motivated the research community to further explore the characterization of non-asymptotic achievable rate regions in different scenarios and determine the practical impacts of wireless  FBL communication. In \cite{Mary2016c}, the authors studied the effects of the non-asymptotic coding rate for fading channels with no channel state information (CSI) at the transmitter, and then analyzed the good-put and energy efficiency over additive white Gaussian noise (AWGN) channels. The authors in \cite{Thapliyal2022} examined the average block error rate for an unmanned aerial vehicle (UAV)-assisted communication system operating in the FBL regime over Rician fading channels, and formulated a blocklength minimization problem with reliability constraints for the considered system. However, it is crucial to highlight that the aforementioned studies have not addressed the important issue of confidentiality in FBL communication, where sensitive information must often be transmitted \cite{Feng2021}.

Recent studies have considered various FBL communication systems from the perspective of PLS while adopting different security strategies. The authors in \cite{Li2022} compared secure transmission rates in FBL communication with queuing delay requirements under different CSI assumptions, and found that having perfect knowledge of the instantaneous CSI of the eavesdropper only provides a marginal gain in the secrecy rate for the high signal-to-noise ratio (SNR) regime. The authors in \cite{Wang2019e} studied secure FBL communication for mission-critical IoT applications with an external eavesdropper, proposed an analytical framework to approximate the average achievable secrecy throughput with FBL coding, and determined the optimal blocklength for maximizing the secrecy throughput. In \cite{Ren2020}, a resource allocation problem for secure delay-intolerant communication systems with FBL was studied, and the authors optimized both the channel bandwidth allocation to minimize power consumption while maintaining certain communication requirements.  The authors in \cite{Chen2020b} investigated the secrecy performance of cognitive IoT with low-latency requirements and security demands, addressing the spectrum scarcity problem. Specifically, they derived closed-form approximations for the secrecy throughput and investigated its impact on transmission delay and reliability. The studies \cite{Wang2020b,mamaghani2023globecom,mamaghani2023secure}  explored designing UAV-enabled secure communication systems with FBL to improve the average secrecy rate performance while meeting security and reliability requirements. In \cite{Feng2022}, the authors presented an analytical framework to approximate the average secrecy throughput with FBL, based on an outage probability defined according to simultaneously establishing the reliability and secrecy of FBL transmission. 

Despite the merits of these studies, the secrecy performance of FBL transmissions over fading channels based on the information leakage metric has been less studied in the literature, and most prior work has regarded it as a preset value that indicates the specified level of secrecy. However, in fading channels where the instantaneous CSI of the eavesdropper’s channel is often unknown, legitimate users do not have direct control over the amount of information leakage, which is essentially a random quantity depending on the eavesdropper's channel variation. Therefore, this paper is motivated by the need for a statistical measure of information leakage in the FBL regime.

\vspace{-3mm}

\subsection{Contributions}

To the best of the authors' knowledge, this is the first study which lays a foundation for secure communication in the FBL regime over \textit{fading channels} from the viewpoint of information leakage. The three-fold contributions of this study are detailed below.
\begin{itemize}
    \item 
    We explore \textit{average information leakage} (AIL), a new measure of secrecy performance, in a multi-antenna wiretap communication system operating in the FBL regime. Assuming the instantaneous CSI of the eavesdropper's channel is unavailable, we establish a simple statistical relationship between the conventional secrecy outage probability widely used in the IBL regime and the variational distance-based information leakage formulation in the FBL regime. This relationship reveals the inherent connections between these two security-oriented performance formulations.
    
    \item  We develop a generalized framework to evaluate the exact and approximate AIL performance over arbitrary fading channels with the adopted beamforming, and then perform case studies with an artificial noise (AN) beamforming strategy over both Rayleigh and Rician fading channels to carefully analyze the AIL, where closed-form expressions are obtained. Using extensive simulations, we confirm the accuracy of the theoretical results and explore the impact of various key system parameters.   
    Specifically, our findings demonstrate that permitting a small amount of AIL can significantly enhance reliability. Also, compared with Rayleigh fading, the LoS component in Rician fading is beneficial in decreasing AIL for a given reliability performance level.
    
    \item  Based on the analytical results, we formulate a design problem to maximize the average secrecy throughput using FBL transmission. We then optimize both the coding blocklength and power allocation using adaptive and non-adaptive schemes. In particular, for the non-adaptive scheme, we find that as the number of information bits increases, the optimal blocklength tends to grow linearly while less transmit power is generally allocated to AN signals to improve the overall AST.
\end{itemize}

\vspace{-3mm}  \subsection{Organization and Notation}
The remainder of this paper is organized as follows. Section~\ref{sec:sysmodel} introduces the system model and assumptions followed by the representation of the secrecy rate formulation in the FBL regime. In Section~\ref{sec:performance}, we develop a generalized framework for analyzing the secrecy performance of the system in terms of the AIL.  Section \ref{sec:case} examines case studies and derives closed-form expressions for the AIL performance. Next, we formulate a design problem and provide solutions in Section \ref{sec:problem} to find key system parameters that lead to improvement in the secure transmission performance. Simulation results and technical discussions are presented in Section \ref{sec:simulations}. Finally, conclusions are drawn in Section~\ref{sec:conclusion}.
   
\textit{Notation:} The following notation is used throughout the paper:
$\CN(\mu, \sigma^2)$ denotes a circularly symmetric complex normal distribution with mean $\mu$ and variance $\sigma^2$. In addition,
$\E\{X\}$ and $\V\{X\}$ denote the expectation and variance of the random variable (r.v.) $X$, respectively. Further, $f_{X}(\cdot)$ and $F_{X}(\cdot)$ represent the probability density function (PDF) and the cumulative distribution function (CDF) of the r.v. $X$, respectively. The operator $\lceil a \rceil$ indicates the smallest integer larger than or equal to $a$, while $\|\cdot\|$, $|\cdot|$, $(\cdot)^\dagger$ denote
the Euclidean norm, absolute value, and conjugate transpose, respectively. The real and imaginary parts of a variable $x$ are denoted respectively as $\Re(x)$ and $\Im(x)$, and $j$ indicates the imaginary unit with $j^2 = -1$. \textcolor{black}{The notation $\e$ represents Euler's number.} Moreover, $\mathbb{C}$, $\mathbb{R}^+$, and $\mathbb{Z}^+$ represent the sets of complex numbers, nonnegative real numbers, and nonnegative integer-valued numbers, respectively. Also, for a real valued $x$,  we define $[x]^+\treq\max\{x,0\}$. Further, $\I_k$ denotes the identity matrix of size $k$.  \textcolor{black}{Finally, the big-O notation $\mathcal{O}(\cdot)$ indicates a quantity that is of the same order as its argument.}

\section{System Model}\label{sec:sysmodel}

\begin{figure}[!t]
\centering
\includegraphics[width=0.9\columnwidth]{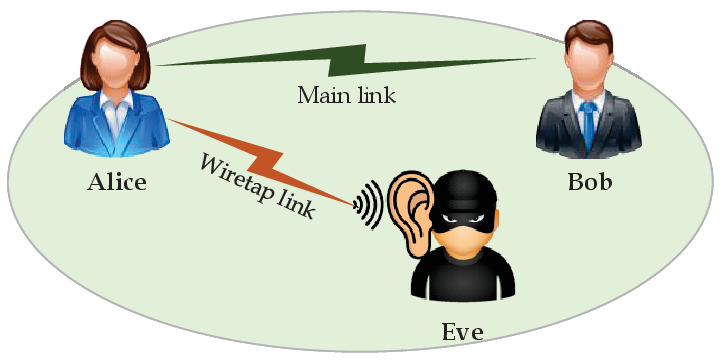}
\caption{System model for typical wiretap FBL communication in  IoT networks.}
\label{fig1}
\end{figure}

We consider a secure downlink IoT communication system, as depicted in Fig. \ref{fig1}, where a multi-antenna access point (Alice) serves an intended IoT device (Bob) through confidential FBL transmissions in the presence of a passive eavesdropper (Eve), who attempts to wiretap the ongoing legitimate communication. We assume that Alice is equipped with $k$ antennas, while both Bob and Eve have a single antenna, which forms a typical multiple-input single-output (MISO) scenario in IoT downlink applications. 

In this work, we assume quasi-static block fading channels, where the channel coefficients remain unchanged over the duration of one FBL packet within the channel coherence time, and are independent and identically distributed (i.i.d.) among different packets. 
Thus, we represent the reciprocal Alice-Bob channel (i.e., main link) and Alice-Eve channel (i.e., wiretap link)  by complex-valued channel vectors $\g_b\in\mathbb{C}^{1\times k}$  and $\g_e\in\mathbb{C}^{1\times k}$, where subscripts $b$ and $e$ refer to Bob and Eve, respectively. Notice that $\g_r,~r\in\{b,~e\}$, encapsulates both large-scale channel attenuation in terms of distance-dependent path loss and small-scale fading because of multi-path propagation. As a result, we have 
\begin{align}
    \g_r = \sqrt{\beta_r}{\h}_r,\quad r\in\{b,~e\}
\end{align}
where $\beta_r=\beta_0 d^{-\eta}_r$ accounts for large-scale attenuation with $\beta_0$, $d_r$, $\eta$ respectively indicating the channel power gain at the reference distance $1$m, the distance between Alice and receiver $r$, and the environmental path loss exponent. Furthermore, ${\h}_r\in\mathbb{C}^{1\times k}$, indicating the small-scale fading component, is a complex-valued normal random vector, whose entries have magnitudes that follow a normalized arbitrary fading distribution such as Rayleigh, Rician, etc., depending on the propagation environment.

We assume that the CSI for the main link is perfectly known; thus, Alice knows $\h_b$ and Bob is aware of his own CSI through channel training and reciprocity. We further assume that Eve can obtain her instantaneous CSI, however only statistical CSI for the wiretap link is available at the legitimate parties. Accordingly, the multi-antenna transmitter Alice should leverage an appropriate beamforming technique to combat eavesdropping and improve the confidentiality of the communication. Thus, we assume that Alice generates the unit-power transmit signal vector $\x$ such that $\E\{\|\x\|^2\}=1$, having applied an appropriate precoding strategy over the information-bearing signal $s$. Then, she sends the resultant $k\times 1$ signal $\x$ with a fixed transmit power $P$ over the fading channel $\g_r$. Accordingly,  the received signal at node $r$ for one channel use can be written as
\begin{align}\label{yj}
    y_r = \sqrt{P}  \g_r \x  + \xi_r, \quad r\in \{b,~e\} 
\end{align}
where $\xi_r \sim \CN(0, \sigma^2)$ accounts for the AWGN at node $r$. In the sequel,  we denote the received signal-to-noise ratio (SNR) at the intended node $r$  as $\gamma_r$, where $r\in \{b,~e\}$. Furthermore, the ratio $\frac{P}{\sigma^2}\treq \rho$ is referred to as the \textit{transmit SNR}.

\vspace{-3mm}  \subsection{Secure FBL communication}
The secrecy performance of wiretap systems has conventionally been evaluated according to Shannon's secrecy capacity criteria achieved by Wyner's wiretap coding in an IBL regime, where Bob can reliably recover the confidential message transmitted by Alice, provided that the transmission codeword rate $R_b$ (i.e., the transmission rate of both the intended confidential information and the additional redundant component) is not larger than the main channel capacity $C_b$, i.e., $R_b \leq C_b$. Recall that wiretap coset coding involves two rate parameters, namely the codeword rate $R_b$ and the secrecy rate $R_s$ (i.e., the rate of transmission of the intended confidential information), both of which can be designed to establish a positive rate difference or the so-called rate redundancy $R_e = R_b - R_s$, which reflects the cost of securing confidential message transmission against eavesdropping. 

One key feature of wiretap coding is the asymptotic perfect secrecy condition, i.e., given the availability of global CSI, zero information leakage is guaranteed asymptotically in an information-theoretic sense (regardless of Eve’s computing power) as the coding blocklength approaches infinity.  Nonetheless, when Eve's channel is not completely known, as assumed in this work, perfect secrecy is not always attainable and a secrecy outage may occur. Specifically, in an IBL regime, given knowledge of the main channel's CSI, the codeword rate $R_b$ can be set arbitrarily close to $C_b$ to ensure the maximum rate of transmission while not resulting in any decoding error at Bob. Thus, owing to the assumption that Eve's CSI is unknown to Alice and Bob, a secrecy outage event happens when the redundancy rate $R_e$ falls below the wiretap channel capacity $C_e$, i.e., $R_e < C_e$. Note that a secrecy outage is treated as a failure to achieve asymptotic perfect secrecy in the IBL regime.

\textcolor{black}{On the other hand, in the FBL regime, both a decoding error at Bob and information leakage to Eve may occur, leading to a further loss of communication reliability and secrecy compared to the IBL regime. The maximum achievable secrecy rate $R^*_s$ of Alice's FBL transmission for a message $W$ in the set $\mathcal{M} \treq \{1, ... , M\}$ with $m=\log_2 M$ bits and conveyed over a  blocklength $N$ can be defined as the maximal rate $\frac{m}{N}$ bits-per-channel-use (bpcu) such that a secrecy code $(m, N, \varepsilon, \delta)$ exists; i.e.
\begin{align}\label{exactfblsec}
    R^*_s \treq \sup\Big\{\frac{m}{N}: \exists (m, N, \varepsilon, \delta)~\text{secrecy code}\Big\}.
\end{align}
Note that the FBL secrecy coding rate given by \eqref{exactfblsec} incorporates Bob's desired maximum decoding error probability, characterized by $\varepsilon$, satisfying the constraint
\begin{align}
    \underset{m\in \mathcal{M}}{\max}~\Pr_{y_b|W=m}(g(y_b) \neq m) \leq \varepsilon,
\end{align}
where $\Pr_{y_b|W=m}(\cdot)$ is the conditional probability distribution of the received signal $y_b$ at Bob, given that Alice sends message $W\in\mathcal{M}$, and $g(\cdot)$ indicates Bob's adopted decoder to assign an estimate message $\hat{W}$ to $y_b$. Also, the secrecy constraint in terms of confidential information leakage to Eve, characterized by $\delta$\footnote{While the information leakage $\delta$ can theoretically range between 0 and 1, a smaller value will provide a stricter level of secrecy. In practice, this quantity should ideally be well below $0.1$ for acceptable secrecy performance. Nevertheless, $\delta$ is directly uncontrollable, particularly in scenarios with stochastic fading channels.}, is based on the notion of maximum total variation secrecy, expressed as 
\begin{align}
    S_{max}(W|y_e) \leq \delta,
\end{align}
where the quantity $S_{max}(W|y_e)$ is a statistical measure of independence between Alice's message $W$ and Eve's observation $y_e$. This measure is determined by calculating the variational distance between their joint and product distributions \cite{Yang2019}.
Note that the maximum total variation secrecy
is equivalent to the widely adopted secrecy metrics \textit{distinguishing secrecy} \cite{iwamoto2011security} and \textit{semantic secrecy} \cite{bellare2012semantic} (interested readers can refer to \cite{Yang2019, iwamoto2011security, bellare2012semantic} and references therein for detailed proofs and further discussion).}

We stress that calculating the precise secrecy capacity of FBL transmissions using \eqref{exactfblsec} is quite challenging. \textcolor{black}{To overcome this challenge, \cite{Yang2019} obtained a lower bound on $R^*_s$. By  dropping the big-O term $\mathcal{O}\left(\frac{\log_2 N}{N}\right)$ in this lower bound expression based on the asymptotic expansion and normal approximation, we approximate $R^*_s$ as
\begin{align}\label{sp_secrate}
    R^*_s &\approx \Big[ C_s(\gamma_b, \gamma_e) - \sqrt{\frac{V_b}{N}}\Qi\left(\varepsilon\right) -\sqrt{\frac{V_e}{N}}\Qi\left(\delta\right)\Big]^+,
\end{align}where $\Q^{-1}(x)$ is the inverse of the Gaussian Q-function defined as $\Q(x)=\int^{\infty}_{x}\frac{1}{\sqrt{2\pi}}\e^{-\frac{r^2}{2}}dr$, and $C_s$ is the secrecy capacity in the IBL regime given by
\begin{align}
    C_s(\gamma_b, \gamma_e) = \log_2(1+ \gamma_b) - \log_2(1+\gamma_e).
\end{align}}
The quantities $V_b$ and $V_e$ in \eqref{sp_secrate} represent the stochastic variation of Bob and Eve's channels, respectively, which can be expressed mathematically as
\begin{align}
    V_r = \log^2_2\e~\frac{\gamma_r(\gamma_r+2)}{(\gamma_r+1)^2},\quad r\in \{b,~e\}
\end{align}
\textcolor{black}{
Note that the penalty terms in \eqref{sp_secrate} are inversely proportional to $\sqrt{N}$, which indicates that $R^*_s$ asymptotically coincides with the IBL secrecy capacity $C_s$ as  $N$ approaches infinity.}

\section{Secrecy Performance}\label{sec:performance}
Our research focuses on the ergodic secrecy performance of FBL transmissions using the metric of information leakage, which is not well understood in the existing body of security literature. Moreover,  we aim to extend the fundamental results in \cite{Yang2019} by considering fading channels without knowledge of the instantaneous CSI of Eve's channel. Additionally, it is desirable to establish a simple and intuitive relationship in a statistical sense between the secrecy outage formulation widely used in the IBL regime and the variational distance-based information leakage formulation in the FBL regime. Our objective is to use this relationship to guide or at least simplify secure communication design in the FBL regime when dealing with fading channels.

\vspace{-3mm}  \subsection{Average Information Leakage}
Let us consider that Alice sends $m$  confidential information bits over $N$ channel uses in each  FBL transmission. Thus, the secrecy rate of the FBL transmission can be denoted by $R_s=\frac{m}{N}$, where the blocklength $N$ should be appropriately determined. 
Note that if the quality of the main channel is extremely low or the adopted secrecy rate $R_s$ is greater than the FBL secrecy capacity $R^*_s$, then an outage event would occur in which the desired secrecy and reliability performance for FBL transmission in terms of $(\varepsilon, \delta)$ could not be guaranteed. Thus, in order to meet such requirements during the transmission period, an on-off transmission policy is generally necessary to ensure that transmission takes place only when the main channel gain $\|\h_b\|^2$ is sufficiently strong; otherwise, Alice does not transmit\footnote{Note that an on-off threshold scheme may also consider the quality of Eve's channel \cite{He2013a}. For example, if the channel gain between Alice and Eve is strong, sending a message could greatly increase the likelihood of confidential information being leaked to Eve. However, as assumed in this work for a passive eavesdropping scenario, Alice does not have knowledge of $\gamma_e$, and thus she conducts FBL transmission solely based upon the feedback received from Bob.}.
In the case of no transmission, the information leakage to Eve is simply set to zero due to the lack of transmission. Henceforth, we assume that Alice does transmit and transmission occurs at a rate of $R_s$. 

In this work, we assume that Alice only knows the instantaneous CSI of Bob's channel but not that of Eve, which implies that the legitimate users have the ability to control/constrain the reliability performance ($\varepsilon$) but not the secrecy performance ($\delta$). Accordingly, we only set a requirement on reliability (i.e., $\varepsilon$ is fixed and known) and aim to determine the best secrecy performance that can be attained. To meet this aim, we set $R_s = R^*_s$, where $R^*_s$ is the achievable FBL secrecy rate given by \eqref{sp_secrate}, and then obtain the resulting information leakage $\delta$, formulated as 
\begin{align}\label{delta}
\delta =\Q\left(\sqrt{\frac{N}{V_e}}\left[C_s(\gamma_b, \gamma_e)- \sqrt{\frac{V_b}{N}}\Qi(\varepsilon) -\frac{m}{N}\right]\right).
\end{align}
By doing so, we obtain the amount of information leakage that would occur given the best possible coding strategy for transmitting the confidential information at the rate of $R_s$ with Bob's decoding error probability equal to $\varepsilon$. 
\textcolor{black}{
\begin{remark}
While not considered in this work, it is important to note that the scenario where the CSI of both the main link and the wiretap link is unknown sometimes reflects a more realistic and challenging environment, where uncertainties in both the main link and wiretap link need to be jointly addressed. We leave such a study to future work and, here, focus on the secrecy performance, given that the legitimate link is reliable enough to satisfy the required performance.
\end{remark}}

According to \eqref{delta}, it is evident that $\delta$ is a random quantity determined by the instantaneous SNR $\gamma_e$ of Eve's channel. We note that the sign of the argument of the Gaussian Q-function given by \eqref{delta} should be positive; otherwise, we have $\delta \geq 0.5$, which would greatly compromise the secrecy. Furthermore, since $\delta$ given in \eqref{delta} is conditioned on stochastic channel realizations of Eve and only statistical information about Eve's channel is assumed to be available, we average  the information leakage $\delta$ over all the realizations of $\gamma_e$ as
\begin{align}\label{delta_bar}
    \bar{\delta} &= \E\{\delta |~\h_b\}\nonumber\\
    &\hspace*{-5mm}=\int_{\R^+} \Q\left(\sqrt{\frac{N}{V_e(x)}}\left[\log_2\left(\frac{1+\gamma_b}{1+x}\right) -{R_0}\right]\right)f_{\gamma_e}(x) dx,
\end{align}
where $V_e(x)=\log^2_2\e\frac{x(x+2)}{(x+1)^2}$ and ${R_0}\treq\sqrt{\frac{V_b}{N}}\Qi(\varepsilon)+\frac{m}{N}$. 

It is intractable to obtain a  closed-form expression for \eqref{delta_bar} due to the integration over the complicated composite Gaussian Q-function. In Proposition \ref{prop1}, we derive an approximate expression for  \eqref{delta_bar} using the Laplace’s Approximation Theorem \cite{laplace}. 
\begin{prop}\label{prop1}
The average information leakage (AIL)  $\bar{\delta}$ for the considered wiretap channel can be approximately calculated as
 \begin{align}\label{deltaApprox}
      \bar{\delta} \approx  1-F_{\gamma_e}\left(x_0\right),
 \end{align}
where $F_{\gamma_e}(\cdot)$ is the CDF of $\gamma_e$, and $x_0$ is defined for $x_0\geq 0 $ as 
\begin{align}\label{x0_ineq}
    x_0 = \frac{1+\gamma_b}{2^{R_0}}-1.
\end{align}
 \end{prop}
\begin{proof}
Please see Appendix \ref{AppendixA}
\end{proof}

\begin{remark}\label{remark1}
Note that the AIL approximation derived in this work is general and applicable to any precoding strategy Alice adopts, regardless of the fading channel model. 
\end{remark}

\vspace{-3mm}  \subsection{Secrecy Outage vs. Average Information Leakage}
Recall that a secrecy outage event in the conventional IBL regime occurs when the capacity of the wiretap link $C_e$ is larger than the redundancy rate $R_e$. Accordingly, the secrecy outage probability (SOP) is generally formulated as
\begin{align}\label{pso}
P_{so} &= \Pr\{C_e > R_e\}=1-\Pr\{C_e \leq R_e\},\nonumber \\
& = 1- F_{\gamma_e}\left(2^{R_e}-1\right).
\end{align}
Comparing \eqref{deltaApprox} with \eqref{pso}, it becomes evident that these two expressions are identical for a particular choice of redundancy rate. This provides a compelling link between the well-known SOP in the IBL regime and the AIL in the FBL regime, leading to the following important corollary.
\begin{cor}
The Laplace/saddle-point approximation of the AIL in the FBL regime given in {\normalfont{Proposition \ref{prop1}}} is equivalent to the SOP in the IBL regime given by \eqref{pso} when setting the redundancy rate $R_e = \log_2(1+\gamma_b) -\sqrt{\frac{V_b}{N}}\Qi(\varepsilon)-\frac{m}{N}$.
\end{cor}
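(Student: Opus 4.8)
The plan is to establish the corollary by a direct algebraic identification of the arguments of the CDF $F_{\gamma_e}(\cdot)$ appearing in the two expressions. First I would recall from Proposition~\ref{prop1} that the Laplace/saddle-point approximation of the AIL reads $\bdelta \approx 1 - F_{\gamma_e}(x_0)$, with $x_0 = \frac{1+\gamma_b}{2^{R_0}} - 1$ and, by definition, $R_0 = \sqrt{\frac{V_b}{N}}\Qi(\varepsilon) + \frac{m}{N}$. On the other side, the SOP in \eqref{pso} reads $P_{so} = 1 - F_{\gamma_e}\!\left(2^{R_e}-1\right)$. Since both quantities have the common form $1 - F_{\gamma_e}(\cdot)$, they coincide if and only if their arguments coincide, i.e., $2^{R_e}-1 = x_0$.

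Next I would simply solve $2^{R_e}-1 = x_0$ for $R_e$. Substituting the expression for $x_0$ gives $2^{R_e} = 1 + x_0 = \frac{1+\gamma_b}{2^{R_0}}$, hence $R_e = \log_2(1+\gamma_b) - R_0$, and expanding $R_0$ yields $R_e = \log_2(1+\gamma_b) - \sqrt{\frac{V_b}{N}}\Qi(\varepsilon) - \frac{m}{N}$, which is precisely the claimed redundancy rate. Conversely, inserting this $R_e$ back into \eqref{pso} reproduces $1 - F_{\gamma_e}(x_0)$, so the equivalence holds in both directions.

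There is essentially no analytical obstacle here: the statement is a structural observation rather than a calculation, so the ``hard part'' is only in recognizing that the FBL penalty terms $\sqrt{V_b/N}\,\Qi(\varepsilon)$ and $m/N$ play the role of the IBL redundancy budget. The one point worth a remark in the write-up is that the matching $R_e$ must be read as a legitimate (nonnegative) redundancy rate; this is guaranteed whenever the on-off transmission condition used to derive \eqref{delta} (i.e.\ $R_s = R^*_s$ with a valid, nonnegative secrecy rate) is in force, equivalently whenever $x_0 \geq 0$, which is consistent with $x_0$ lying in the support of $F_{\gamma_e}$. Finally, since Proposition~\ref{prop1} is valid for arbitrary precoding and fading models (cf.\ Remark~\ref{remark1}), the equivalence established here inherits the same generality.
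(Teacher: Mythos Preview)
Your proposal is correct and mirrors the paper's own treatment: the paper simply observes that \eqref{deltaApprox} and \eqref{pso} are identical for a particular choice of redundancy rate and states the corollary without further argument, which is exactly the direct algebraic identification $2^{R_e}-1=x_0$ you carry out. Your added remarks on the nonnegativity of $R_e$ (equivalently $x_0\geq 0$) and the inherited generality from Remark~\ref{remark1} are consistent with the paper and make the exposition slightly more complete.
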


\section{Performance Analysis: Case Studies }\label{sec:case}
This section is devoted to performing case studies to characterize AIL under some specific signaling schemes and fading channels. In particular, we carefully analyze the AIL for artificial noise  (AN) beamforming over both Rayleigh and Rician fading channels\footnote{\textcolor{black}{Note that the following analysis can readily be extended to encompass generalized fading models, including Nakagami and log-normal, among others, for specific scenarios. Since different environments may manifest distinct fading behaviors, the chosen model should accurately reflect these conditions for a meaningful analysis with justifiable complexity. In this work, we explore both Rayleigh and Rician fading, two extensively utilized channel models.}}. In doing so, we aim to obtain further insights into the FBL scenario,  verify the accuracy of the approximate results derived in Proposition \ref{prop1} for the AIL, and finally enable the design of key system parameters for secrecy performance improvement.  
 
We assume that Alice adopts an AN beamforming strategy 
for her sporadic secure FBL communication, wherein she transmits the information signal $s$ in conjunction with a $1 \times (k-1)$ AN signal vector $\v$ such that $\v\sim\CN(\mathbf{0},\mathbf{I}_{k-1})$ to confuse Eve and effectively combat eavesdropping \cite{Yan2016}. In particular, letting the power allocation factor $\alpha$, where $0 < \alpha \le 1$, indicate the fraction of the transmit power allocated to the information signals, and assuming an equal power distribution of the AN signal among the transmit antennas, Alice designs the $k\times 1$ transmit signal vector as
\begin{align}\label{x_ani}
    \mathbf{x} =[\w~\U] \begin{bmatrix}
            \sqrt{\alpha}s\\
           \sqrt{\frac{1-\alpha}{k-1}}\v \end{bmatrix}= \sqrt{\alpha} \w s  +\sqrt{\frac{1-\alpha}{k-1}} \U \v ,
\end{align}
where $\W\treq[\w,~\U]$ is the so-called $k\times k$ AN beamforming matrix, in which $\w$ is a $k \times 1$ vector adopted to transmit $s$ while maximizing Bob's instantaneous received SNR, and $\U$ is a $k \times (k-1)$ matrix designed to degrade Eve's channel quality by transmitting $\v$ in all directions except towards Bob. More precisely, Alice computes the eigenvalue decomposition of the matrix $\H=\h^\dagger_b\h_b$, and then selects $\w$ as the principal eigenvector corresponding to the largest eigenvalue of $\H$, and concurrently chooses $\U$ as the remaining eigenvectors of $\H$. Note that the columns of $\U$ are orthonormal and span the nullspace of the main channel $\h_b$, i.e., $\h_b\U=\mathbf{0}$, avoiding any interference to Bob. \textcolor{black}{As such, the received SNR at Bob  can be expressed as
\begin{align}\label{gammab_ani}
    \gamma_b = \alpha \tilde{\gamma}_b,
\end{align} 
where $\tilde{\gamma}_b=\bgamma_b\|\h_b\|^2$ and $\bgamma_b = \rho \beta_b$. As for Eve, the received signal in \eqref{yj} becomes
\begin{align}
y_e= \sqrt{\alpha P \beta_e} \h_e\w s   +\sqrt{\frac{1-\alpha}{k-1} P \beta_e}   \h_e \U \v + \xi_e.   
\end{align}
The corresponding SNR at Eve is then calculated as
\begin{align}\label{gammaE_ani}
    \gamma_e = \alpha \h_e \w \left(\frac{1-\alpha}{k-1} \h_e \U \U^\dagger\h^\dagger_e + \frac{1}{\bgamma_e}\right)^{-1}  \w^\dagger \h^\dagger_e,
\end{align}
where  $\bgamma_e = \rho \beta_e$.}

\textcolor{black}{
\begin{remark}
In our case studies, we consider AN beamforming as the precoding strategy in the MISO wiretap channel. Nevertheless, it is worth pointing out that in a more general MIMO wiretap channel with AN beamforming, when Eve possesses the same or even a larger number of antennas than Alice, significant challenges emerge for ensuring communication confidentiality. With comparable or greater spatial diversity, she can more effectively intercept multiple streams, potentially mitigating the intended interference caused by AN beamforming. This situation complicates interference alignment techniques and necessitates the use of intricate and adaptive resource allocation strategies to maintain an acceptable level of secrecy. Thus, achieving a high secrecy rate or low information leakage, especially in the context of FBL, becomes more challenging. Accordingly, appropriate beamforming and signal processing techniques should be explored for FBL communication systems where an adversary with a comparable or superior number of antennas is present.
\end{remark}}

\vspace{-3mm}  \subsection{Rayleigh Fading}
Here, we obtain the statistical properties of $\gamma_b$ and $\gamma_e$, assuming the communication channels experience Rayleigh fading. For this case, the i.i.d. channel vector $\h_r$ is normally distributed, i.e., $\h_r \sim \CN(\mathbf{0}_{k\times 1}, \I_{k})$.  Thus, the distribution of the main link channel power $\tilde{\gamma}_b$ is expressed as  a sum of squares of $k$ i.i.d. normal r.v.s, and thus follows a  Gamma distribution with shape parameter $k$ and scale parameter $\bgamma_b$, i.e., $\tilde{\gamma}_b \sim \Gam{k}{\bgamma_b}$, whose PDF and CDF are given respectively as \cite{Proakis2006}
\begin{align}\label{cdf_gammaB_Rayl}
    f_{\tilde{\gamma}_b}(x)= \frac{x^{k-1}\e^{-\frac{x}{\bgamma_b}}}{{\bgamma_b}^{k}\Gamma(k)},\quad F_{\tilde{\gamma}_b}(x) = \frac{\gamma(k,\frac{x}{\bgamma_b})}{\Gamma(k)},~x>0,
\end{align}
where $\gamma(s,x)=\int^{x}_{0}t^{s-1}\e^{-t}\diff t$ is the lower incomplete Gamma function and $\Gamma(\cdot)$ is the Gamma function \cite{Gradshteyn2014}. 

To evaluate the statistical characteristics of $\gamma_e$ and the corresponding AIL, we consider two cases:
\subsubsection{\texorpdfstring{$0< \alpha <1~$}~(Case I)}  
With the aid of \cite{Yang2016b}, the PDF of $\gamma_e$ is calculated as
\begin{align}\label{pdf_rayl}
       f_{\gamma_e}(x) = \frac{\tau(x) + \bgamma_e(1-\alpha)}{\alpha \bgamma_e} \e^{-\frac{x}{\alpha \bgamma_e}}\tau(x)^{-k},\quad x\geq 0, 
\end{align}
where $\tau(x) \treq 1+\frac{x(1-\alpha)}{\alpha(k-1)}$. As a result, the approximate AIL for the case of AN beamforming over Rayleigh fading channels is derived by rewriting \eqref{deltaApprox}  as
\begin{align}\label{rayl_ani}
    \bdelta_{Rayl, AN} &\approx 1- \int^{x_0}_0 \frac{\tau(x) + \bgamma_e(1-\alpha)}{\alpha \bgamma_e} \e^{-\frac{x}{\alpha \bgamma_e}}{\tau(x)}^{-k} dx\nonumber\\
    &=   \frac{\e^{-\frac{x_0}{\alpha \bgamma_e}}}{\tau(x_0)^{k-1}}.
\end{align}

\subsubsection{\texorpdfstring{$\alpha=1~$}~(Case II)} When $\alpha=1$, the adopted AN beamforming reduces to maximum ratio transmission (MRT) precoding, allocating all the power for signal transmission without AN. Note that MRT beamforming maximizes the signal-to-noise ratio (SNR) at the intended receiver, Bob, by weighting the transmitted signals based on the channel response of the main link only. As a result, the received SNR at Eve, given by \eqref{gammaE_ani}, can be simplified as
\begin{align}
      \gamma_e &=  \bgamma_e\Big|\h_e\frac{{\h^\dagger_b}}{\|\h_b\|} \Big|^2.
\end{align}

Now, we find the distribution of $\gamma_e$ conditioned on $\h_b$. To this end, we first present the following lemma.
\begin{lemma}\label{lemma1}
    Let $\X\sim \CN(\o_{1 \times N}, \I_N)$ and $\b\in\C^{1\times N}$ with $\|\b\|^2=\sigma^2$. Then, the new r.v. $Y=|\X\b^\dagger|^2$ follows an exponential distribution with mean $\sigma^2$, i.e,  $Y\sim\Exp{\sigma^2}$. 
\end{lemma}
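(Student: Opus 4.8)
The plan is to show that $Y=|\X\b^\dagger|^2$ is the squared magnitude of a scalar complex Gaussian random variable, and then recall that such a squared magnitude is exponentially distributed. First I would observe that $\X\b^\dagger = \sum_{n=1}^N X_n \overline{b_n}$ is a linear combination of the i.i.d.\ entries $X_n\sim\CN(0,1)$, hence itself a zero-mean circularly symmetric complex Gaussian scalar. Because the $X_n$ are independent with unit variance, the variance of $\X\b^\dagger$ is $\sum_{n=1}^N |b_n|^2 = \|\b\|^2 = \sigma^2$, so $Z\treq\X\b^\dagger\sim\CN(0,\sigma^2)$.

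Next I would write $Z = \Re(Z) + j\,\Im(Z)$ with $\Re(Z),\Im(Z)$ independent $\N(0,\sigma^2/2)$ random variables (circular symmetry of the complex normal), so that $Y = \Re(Z)^2 + \Im(Z)^2$ is a sum of two independent scaled chi-squared terms. A standard fact is that the sum of squares of two independent $\N(0,\sigma^2/2)$ variables is exponential with mean $\sigma^2$; one can verify this either by recognizing it as a $\mathrm{Gamma}(1,\sigma^2)$ distribution or by directly integrating the rotationally symmetric Gaussian density over the disk of radius $\sqrt{t}$ to get $F_Y(t)=1-e^{-t/\sigma^2}$ for $t\ge 0$. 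This yields $Y\sim\Exp{\sigma^2}$ as claimed.

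The argument is almost entirely bookkeeping about Gaussian random variables, so there is no serious obstacle; the only point requiring a little care is justifying that $\X\b^\dagger$ is genuinely circularly symmetric complex Gaussian (not merely that its real and imaginary parts are jointly Gaussian with the right total variance), since the exponential conclusion relies on $\Re(Z)$ and $\Im(Z)$ being independent with equal variance. This follows from the circular symmetry of each $X_n$ together with linearity, and I would state it explicitly before passing to the magnitude. Everything else is the routine computation of the variance and the one-line density calculation, which I would not grind through in detail.
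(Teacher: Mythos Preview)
Your proposal is correct and follows essentially the same route as the paper: both arguments show that the inner product $\X\b^\dagger$ is a scalar $\CN(0,\sigma^2)$ random variable and then conclude that its squared magnitude is exponential with mean $\sigma^2$. The paper carries this out by explicitly splitting $\X$ and $\b$ into real and imaginary parts before forming the product, whereas you invoke linearity and circular symmetry directly; if anything, your version is slightly more careful in flagging why $\Re(Z)$ and $\Im(Z)$ are independent with equal variance.
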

\begin{proof}
   We commence the proof by denoting $\X=\X_R + j \X_I$, where $\X_R=\Re(\X)\sim\N(\o_{1 \times N}, \frac{1}{2}\I_N)$ with $\N(\mu, \sigma^2)$ indicating a real-valued normal distribution with mean $\mu$ and variance $\sigma^2$. Likewise, $\X_I=\Im(\X)\sim\N(\o_{1 \times N}, \frac{1}{2}\I_N)$ and $\b=\b_R+j\b_I$ with $\b_R=\Re(\b)$ and $\b_I=\Im(\b)$. Thus, we can rewrite $Y_1=\X\b^\dagger$ as
    \begin{align}
        Y_1= (\X_R \b^T_R + \X_I \b^T_I) + j (\X_I \b^T_R - \X_R \b^T_I),
    \end{align}
    Then it can be verified that $Y_1\sim\CN(0, \sigma^2)$, hence $Y=Y^2_1 \sim \Exp{\sigma^2}$, which completes the proof.
\end{proof}
According to Lemma \ref{lemma1} and considering the closure of the exponential distribution under scaling by a positive factor,  $\gamma_e$ follows an exponential distribution $\gamma_e \sim \Exp{\bgamma_e}$, whose PDF and CDF are given respectively by
\begin{align}
        f_{\gamma_e}(x)= \frac{1}{\bgamma_e}\e^{{-\frac{x}{\bgamma_e}}},\quad F_{\gamma_e}(x) = 1-\e^{{-\frac{x}{\bgamma_e}}},~x>0 .
\end{align}
As a result, the AIL for the case of Rayleigh fading with $\alpha=1$ (a.k.a. MRT beamforming) is approximately calculated, using \eqref{deltaApprox}, as
\begin{align}\label{rayl_mrt}
   \bdelta_{Rayl, MRT}\approx\exp\left(-\frac{x_0}{\bgamma_e}\right).
\end{align}

\subsection{Rician Fading}
Now, assuming that the communication channels experience Rician fading, we obtain the statistical properties of $\tilde{\gamma}_b$ and $\gamma_e$ as follows. For Rician fading, there exists a line-of-sight (LoS) component together with the non-LoS multipath components. Henceforth, the normalized small-scale fading $\h_r$ can be expressed as the superposition of both the LoS and non-LoS components:
\begin{align}\label{hi}
    \h_r= \sqrt{\frac{K_r}{1+K_r}} \pmb{\zeta}_r+\sqrt{\frac{1}{(1+K_r)}}\tilde{\h}_r,\quad r\in\{b,~e\}
\end{align}
where $\pmb{\zeta}_r$ is composed of unit-modulus complex values, $\tilde{\h}_r\sim \CN(\mathbf{0}_{1\times k},\I_k)$, and $K_r$ is the so-called \textit{K-factor} for the $r$-th link that indicates the power ratio between the LoS component, i.e., $\nu^2_r=\frac{K_r}{1+K_r}$, and scattered non-LoS components, i.e., $\varsigma^2_r=\frac{1}{K_r+1}$. Before proceeding further, we present the following lemma which will be used in the subsequent analysis.

\begin{lemma}\label{lemma2}
Given  $k$ i.i.d. real-valued normal r.v.s with mean $\mu_i$ and unit variance, i.e., $\{X_i\sim\N(\mu_i, 1)\}^n_{i=1}$, the r.v. $Y=\sum^k_{i=1}X^2_i$ obeys a noncentral chi-squared (nc-$\chi^2$) distribution with $k$ degrees of freedom and noncentrality parameter $\lambda=\sum^k_{i=1}\mu^2_i$ such that $Y\sim\chi^2_{nc}(k, \lambda)$. The  PDF of $Y$ is  \cite{Papoulis2002}
\begin{align}
f_{Y}(x) &= \frac{1}{2} \exp\left(-\frac{x+\lambda}{2}\right)\left(\frac{x}{ \lambda}\right)^{\frac{k}{4}-\frac{1}{2}} I_{\frac{k}{2}-1}\left(\sqrt{\lambda x}\right),
\end{align}
where  $I_\nu(\cdot)$ is the modified Bessel function of the first kind with $\nu$-th order given by 
\begin{align}
I_\nu(x) = (\frac{x}{2})^\nu\sum^\infty_{i=1}\frac{(x^2/4)^i}{i!\Gamma(\nu+i+1)}.
\end{align}
Moreover, the CDF of $Y$ is expressed as
\begin{align}
    F_Y(y)&=1-\mathbf{Q}_{\frac{k}{2}}(\sqrt{\lambda},\sqrt{y}),
\end{align}
where $\mathbf{Q}_\nu(\cdot,\cdot)$ denotes the generalized Marcum $\mathbf{Q}$-function of order $\nu>0$, represented for $x,y \geq 0$ as \cite{Gradshteyn2014}
\begin{align}\label{gammaB_rice_cdf}
    \mathbf{Q}_\nu(x,y) =  \int_y^\infty t^{\nu-1}e^{-\frac{1}{2}(x^2+t^2)}I_{\nu-1}(xt)\, dt.
\end{align}
Also, the PDF and the CDF of the r.v. $Z=cY$, with $c\in\mathbb{R}^+$ are
$f_Z(z)= \frac{1}{c}f_Y(\frac{z}{c})$ and $F_{Z}(z)=F_Y(\frac{z}{c})$, respectively. Using moment generating functions  \cite{Papoulis2002}, it can be verified that the sum of nc-$\chi^2$ distributed r.v.s $V=\sum^n_{i=1}Y_i$, where $Y_i\sim \chi^2_{nc}(k_i,\lambda_i)$, also has an nc-$\chi^2$ distribution defined by $V\sim \chi^2_{nc}(k_v,\lambda_v)$ with $k_v=\sum^n_{i=1} k_i$ and $\lambda_v= \sum^n_{i=1} \lambda_i$.
\end{lemma}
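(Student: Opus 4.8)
The plan is to establish the distributional identity via moment generating functions (MGFs), to obtain the density by representing the noncentral $\chi^2$ law as a Poisson-weighted mixture of central $\chi^2$ laws and summing the resulting series, and finally to dispatch the scaling and convolution statements by elementary arguments. (As a preliminary remark: when the $\mu_i$ are not all equal the $X_i$ are independent but not identically distributed, so the hypothesis should be read as mere independence; this changes nothing below.)

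\textbf{Step 1: identifying the law.} For a single $X_i\sim\N(\mu_i,1)$, completing the square in a Gaussian integral gives $\E\{e^{tX_i^2}\}=(1-2t)^{-1/2}\exp\!\big(\mu_i^2 t/(1-2t)\big)$ for $t<\tfrac12$. By independence the MGF of $Y=\sum_{i=1}^{k}X_i^2$ is the product,
\[
M_Y(t)=(1-2t)^{-k/2}\exp\!\Big(\tfrac{\lambda t}{1-2t}\Big),\qquad \lambda=\sum_{i=1}^{k}\mu_i^2,
\]
which depends on the means only through $\lambda$; hence the law of $Y$ is a function of $(k,\lambda)$ alone, which we denote $\chi^2_{nc}(k,\lambda)$. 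An orthogonal rotation $O$ with $O(\mu_1,\dots,\mu_k)^{T}=(\sqrt{\lambda},0,\dots,0)^{T}$ further gives $Y\stackrel{d}{=}(Z_1+\sqrt{\lambda})^2+\sum_{i=2}^{k}Z_i^2$ with $Z_i\sim\N(0,1)$ i.i.d., a representation used next.

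\textbf{Step 2: density and CDF.} From the rotated representation (or by checking the MGF), conditioning on an auxiliary $J\sim\mathrm{Poisson}(\lambda/2)$ yields $Y\mid J\sim\chi^2_{k+2J}$, so that
\[
f_Y(x)=\sum_{j=0}^{\infty}\frac{e^{-\lambda/2}(\lambda/2)^{j}}{j!}\,\frac{x^{k/2+j-1}e^{-x/2}}{2^{k/2+j}\,\Gamma(k/2+j)},\qquad x>0.
\]
Pulling out the common factor $e^{-(x+\lambda)/2}$ and matching the remaining power series with $I_{k/2-1}(\sqrt{\lambda x})=\sum_{j\ge0}(\sqrt{\lambda x}/2)^{k/2-1+2j}/(j!\,\Gamma(k/2+j))$ reproduces exactly the stated PDF. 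The CDF then follows by integrating the series term by term against the incomplete-gamma primitives of the central $\chi^2$ densities, or, more transparently, by the change of variables $t=\sqrt{u}$ in the integral defining the generalized Marcum function in \eqref{gammaB_rice_cdf}, which identifies $\int_0^{y}f_Y(u)\,du$ with $1-\mathbf{Q}_{k/2}(\sqrt{\lambda},\sqrt{y})$.

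\textbf{Step 3: scaling, closure, and the main difficulty.} For $Z=cY$ with $c>0$, the claims $f_Z(z)=\tfrac1c f_Y(z/c)$ and $F_Z(z)=F_Y(z/c)$ are the elementary linear change of variables. For $V=\sum_{i=1}^{n}Y_i$ with independent $Y_i\sim\chi^2_{nc}(k_i,\lambda_i)$, multiplying the MGFs from Step~1 gives $M_V(t)=(1-2t)^{-\frac12\sum_i k_i}\exp\!\big((\sum_i\lambda_i)\,t/(1-2t)\big)$, the MGF of $\chi^2_{nc}(\sum_i k_i,\sum_i\lambda_i)$, so uniqueness of MGFs identifies the law of $V$. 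The only genuinely delicate point is Step~2: one must justify the Poisson-mixture identity (equivalently, the termwise inversion of $M_Y$) and then carry out the series identification with $I_{k/2-1}$ cleanly, reconciling the exponents $k/2+j-1$ and $k/2-1+2j$; everything else is bookkeeping.
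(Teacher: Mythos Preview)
Your proof is correct and follows the standard textbook route (MGF identification, Poisson-mixture representation, series identification with $I_{k/2-1}$, and closure under convolution via MGFs). Note, however, that the paper does not actually prove Lemma~\ref{lemma2}: it is stated as a reference lemma with citations to \cite{Papoulis2002} and \cite{Gradshteyn2014} and then used as a black box in the subsequent Rician-fading analysis. So there is nothing to compare against; you have simply supplied a self-contained argument where the authors chose to cite. Your side remark that the hypothesis should read ``independent'' rather than ``i.i.d.'' (since the means differ) is also well taken and does not affect the proof.
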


Now, we focus on determining the statistical properties of $\tilde{\gamma}_b=\bgamma_b\|\h_b\|^2$. To this end, we  rewrite each entry of $\h_b$ in \eqref{hi} as $h_b =X+ j Y$, where  the real-valued normal r.v.s $X$ and $Y$ are defined as $X\sim \N(\mu_x,\sigma^2)$ and $Y\sim \N(\mu_y,\sigma^2)$, respectively, $\sigma^2=\frac{1}{2(K_b+1)}$  $\mu_x = \sqrt{\frac{K_b}{K_b+1}}\cos(\theta)$, $\mu_y = \sqrt{\frac{K_b}{K_b+1}}\sin(\theta)$, and $\theta$ is an arbitrary real-valued constant. Rewriting $X\sim \sigma\N(\frac{\mu_x}{\sigma},1)$ and $Y\sim \sigma\N(\frac{\mu_y}{\sigma},1)$ and using Lemma \ref{lemma2}, it can be seen that the r.v. $Z=(\frac{X}{\sigma})^2 + (\frac{Y}{\sigma})^2$ follows a nc-$\chi^2$ distribution with $2$ degrees of freedom and noncentrality parameter $\lambda_z=\frac{\mu^2_x+\mu^2_y}{\sigma^2}=2K_b$, i.e., $Z\sim\chi^2_{nc}(2, \lambda_z)$. As a result, $\tilde{\gamma}_b$, expressed as a sum of $k$ i.i.d. nc-$\chi^2$ distributed r.v.s each with $2$ degrees of freedom, has a scaled nc-$\chi^2$ distribution with $2k$ degrees of freedom, noncentrality parameter $2kK_b$, and scale parameter $2\bgamma_b(1+K_b)$. The PDF and the CDF of $\tilde{\gamma}_b$ are derived, using Lemma \ref{lemma2}, respectively as
\begin{align}\label{gammaB_rice_pdf}
f_{\tilde{\gamma}_b}(x) &=  \frac{1}{4\bgamma_b (K_b+1)}\exp\left(-\Big[\frac{x}{4\bgamma_b (K_b+1)} + k K_b\Big]\right)\nonumber\\
&\hspace{-10mm}\times  I_{k-1}\left(\sqrt{\frac{kK_bx}{\bgamma_b(K_b+1)}}\right)\left(\frac{x}{4kK_b\bgamma_b(K_b+1)}\right)^{\frac{k-1}{2}},
\end{align}
and
\begin{align}\label{cdf_gammaB_Rice}
    F_{\tilde{\gamma}_b}(x) &= 1- \mathbf{Q}_{k}\left(\sqrt{2kK_b}, \sqrt{\frac{2(K_b+1)x}{\bgamma_b}} \right).
\end{align}

We now investigate the distribution of $\gamma_e$ for different cases:
\subsubsection{\texorpdfstring{$0 < \alpha <1~$}~(Case I)}
We can rewrite $\gamma_e$ as
\begin{align}
\gamma_e = \frac{{\alpha \bgamma_e}|\h_e \w|^2}{\frac{1-\alpha}{k-1}\bgamma_e \|\h_e \U\|^2 + 1}
&= \frac{X}{Y+1}.
\end{align}
Note that $\h_e$ has i.i.d. complex normal entries and $\W$ is a unitary matrix \cite{Zhou2010a}. Thus, $\h_e\W$ consists of i.i.d. complex normal values, and thus the elements of $\h_e\w$ and $\h_e\U$ are also independent, which further results in the independence of $X=\alpha\bgamma_e|\sum^k_{i=1}h_e(i)w(i)|^2$ and $Y=\frac{1-\alpha}{k-1}\bgamma_e|\sum^k_{p=1}\sum^k_{q=1}h_e(p)\Omega(p,q)h^*_e(q)|^2$, where $\Omega\treq\U\U^\dagger$. In order to find the CDF of $\gamma_e$, considering the independency of $X$ and $Y$, we can write
\begin{align}\label{integ}
    F_{\gamma_e}(z)=\int^\infty_0f_Y(y)\int^{z(y+1)}_0f_X(x) dx dy.
\end{align}
Since evaluating the integral in \eqref{integ} is too difficult owing to the challenge of obtaining exact distributions for $X$ and $Y$, we take an alternative numerical approach similar to \cite{Feng2022}, and approximate $\gamma_e$ as a Gamma distributed r.v.\footnote{
The decision to adopt the Gamma distribution as the approximate model is based on numerous experiments. While more accurate models could be considered, they may entail more parameters and hence higher complexity.}, i.e., $\gamma_e\approx Z \sim\mathcal{G}(a_{z}, b_z)$, where $a_z$ and $b_z$ are shape and scale parameters that can be numerically evaluated through the parameter estimation formulas 
\begin{align}
    a_z = \frac{\V\{\gamma_e\}}{\E\{\gamma_e\}},~ b_z = \frac{\E^2\{\gamma_e\}}{\V\{\gamma_e\}},
\end{align}
where $\V\{x\}=\E\{(x-\E\{x\})^2\}$. Accordingly, the CDF of the r.v. $Z$ is given by
\begin{align}\label{approx_cdf}
    F_Z(z) = \frac{\mathcal{\gamma}(a_z,\frac{z}{b_z})}{\Gamma(a_z)}.
\end{align}

\begin{figure}[t]
    \centering
    \includegraphics[width = \columnwidth]{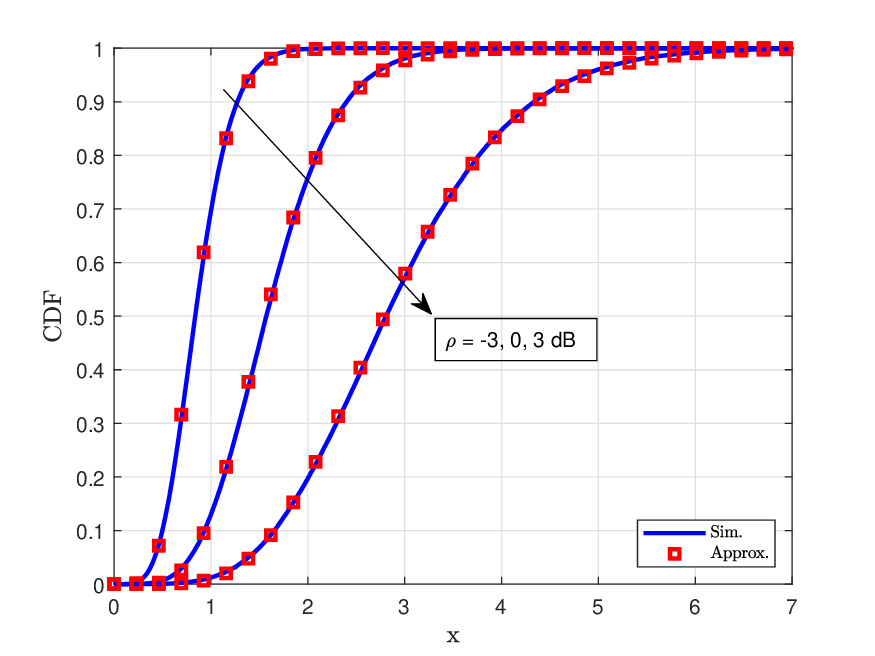}
    \caption{Comparison between  $F_{\gamma_e}(x)$, labeled as {Sim.} and $F_Z(x)$, marked as {Approx.} for different transmit SNRs. The simulation parameters are set as $\alpha = 0.7$, $\beta_b = 3$, $\beta_e = 1$, $K_b=K_e=5$, and $k=4$.}
    \label{simfig1}
\end{figure}
To illustrate the accuracy of the approximate distribution, Fig. \ref{simfig1} shows the results of a Monte-Carlo simulation comparing the true distribution, i.e., $F_{\gamma_e}(x)$ obtained via $100,000$ channel realizations and plotted by solid lines, with the approximate model given by \eqref{approx_cdf}, plotted by marked curves. Fig. \ref{simfig1} illustrates that the approximate model very accurately mirrors the true distribution for different transmit SNRs. As a result, the approximate AIL expression for FBL communication over Rician fading channels  under Case I  is calculated as
\begin{align}\label{rice_ani}
    \bdelta_{Rice, AN}\approx 1- \frac{\mathcal{\gamma}(a_z,\frac{x_0}{b_z})}{\Gamma(a_z)}.
\end{align}

\subsubsection{\texorpdfstring{$\alpha = 1~$}~(Case II)}
For the case of $\alpha=1$, we have $\gamma_e=\frac{\bgamma_e}{\|\h_b\|^2}|\hat{\h}|^2$, where $\hat{h}\treq\h_e \h^\dagger_b$. Consequently, we rewrite $\hat{h}$ using \eqref{hi} as
\begin{align}
\hat{h} &= \sqrt{\frac{K_e}{1+K_e}}\pmb{\zeta}_e\h^\dagger_b +\sqrt{\frac{1}{(1+K_e)}}\tilde{\h}_e \h^\dagger_b,\\
&\stackrel{(a)}{=} X + j Y,
\end{align}
where $(a)$ follows from Lemma \ref{lemma1}, $X\sim\N(\mu_x,\sigma^2_1)$ and $Y\sim\N(\mu_y,\sigma^2_1)$, $\mu_x=\Re(\sqrt{\frac{K_e}{1+K_e}}\pmb{\zeta}_e\h^\dagger_b)$, $\mu_y=\Im(\sqrt{\frac{K_e}{1+K_e}}\pmb{\zeta}_e\h^\dagger_b)$, and $\sigma^2_1=\frac{\|\h_b\|^2}{2(1+K_e)}$. We also note that $\frac{X^2}{\sigma^2_1}\sim \chi^2_{nc}(1, \lambda_x)$, where ${\lambda_x}=\frac{\mu^2_x}{\sigma^2_1}$, $\frac{Y^2}{\sigma^2_1}\sim \chi^2_{nc}(1, \lambda_y)$, and ${\lambda_y}=\frac{\mu^2_y}{\sigma^2_1}$. Using Lemma \ref{lemma2}, $\gamma_e$ thus follows a scaled nc-$\chi^2$ distribution with two degrees of freedom, noncentrality parameter $\lambda_x+\lambda_y=\frac{\mu^2_x+\mu^2_y}{\sigma^2_1} = 2K_e\frac{|\pmb{\zeta}_e\h^\dagger_b|^2}{\|\h_b\|^2}$, and scaling factor $c=\frac{\bgamma_e}{\|\h_b\|^2}\sigma^2_1 = \frac{\bgamma_e}{2(1+K_e)}$. Its CDF can be written  as
\begin{align} 
F_{\gamma_e}(x) &= 1- \mathbf{Q}_{1}\left(\sqrt{2K_e \frac{|\pmb{\zeta}_e\h^\dagger_b|^2}{\|\h_b\|^2}}, \sqrt{\frac{x}{c}}\right),\\
&\stackrel{(b)}{\geq}  1- \mathbf{Q}_{1}\left(\sqrt{2kK_e}, \sqrt{\frac{x}{c}}\right). \label{CDF_rice}
\end{align}
where $(b)$ follows from the Cauchy–Schwarz inequality and considering the fact that $\mathbf{Q}_{1}(x,y)$ is strictly non-decreasing with respect to (w.r.t.) the first argument $x$.
Substituting \eqref{CDF_rice} into the AIL expression in Prop. \ref{prop1}, we  obtain an approximate upper-bound expression for the AIL in FBL communication over Rician fading  under Case II  as
\begin{align}\label{rice_mrt}
    \bdelta_{Rice, MRT}\approx  \mathbf{Q}_{1}\left( \sqrt{2kK_e}, \sqrt{\frac{2(1+K_e)x_0}{\bgamma_e}}\right).
\end{align}

\vspace{-3mm}  \subsection{High-SNR Analysis of AIL}\label{section:highSNR}
In this section, we analyze the performance of AIL in the high-SNR regime to obtain more insights into system performance and the key design parameters. Thus, we evaluate $\bdelta$ when $\rho$ approaches infinity for all of our considered scenarios. Prior to this, we examine the asymptotic behavior of $x_0$ at high SNRs, denoted by $x^{\infty}_0$, as
\begin{align}
     x^{\infty}_0 =\lim_{\rho\to\infty} x_0 = \frac{\gamma_b}{\exp({R}^\infty_0\ln 2)},
\end{align}
where ${R}^\infty_0$ is a positive constant represented by
\begin{align}\label{Rinf}
    {R}^\infty_0 = \frac{\log_2(\e)}{\sqrt{N}}\Qi(\varepsilon)+\frac{m}{N}.
\end{align}

\subsubsection{Rayleigh Fading}
We compute the high-SNR approximation of AIL for Case I, using \eqref{rayl_ani}, as
\begin{align}\label{rayl_caseI_inf}
     \bdelta^{\infty}_{Rayl, AN}&=\lim_{\rho\to\infty} \bdelta_{Rayl, AN}\\
     &\rightarrow \frac{\e^{-\frac{\gamma_b}{\alpha \bgamma_e 2^{{R}^\infty_0}}}}{(\frac{1-\alpha}{\alpha(k-1)}x^\infty_0)^{k-1}}.
\end{align}
Similarly, the high-SNR approximation of AIL for Case II given by \eqref{rayl_mrt} is calculated as
\begin{align}
    \bdelta^\infty_{Rayl, MRT}&=\lim_{\rho\to\infty} \bdelta_{Rayl, MRT} \\
    &\rightarrow {\e^{-\frac{\gamma_b}{\bgamma_e {R}^\infty_0}}}.
\end{align}

\subsubsection{Rician Fading}
For Rician fading scenarios, the high-SNR assumption for Case I and Case  II leads respectively to
\begin{align}
    \bdelta^\infty_{Rice, AN}&=\lim_{\rho\to\infty} \bdelta_{Rice, AN} \nonumber\\
    &\rightarrow 1- \frac{\mathcal{\gamma}(a_z,\frac{\gamma_b}{{R}^\infty_0 b_z})}{\Gamma(a_z)}
\end{align}
and 
\begin{align}
    \bdelta^\infty_{Rice, MRT}&=\lim_{\rho\to\infty} \bdelta_{Rice, MRT} \\
    &\rightarrow\mathbf{Q}_{1}\left( \sqrt{2kK_e}, \sqrt{\frac{2(1+K_e)\gamma_b}{2^{{R_0}^{\infty}}\bgamma_e}}\right).
\end{align}

\textcolor{black}{Based on the above results, it is clear that the AN beamforming strategy can achieve an AIL performance that asymptotically approaches zero, irrespective of the channel fading type, i.e., $\bdelta^\infty_{Rice, AN}=\bdelta^\infty_{Rayl, AN}\rightarrow 0$ when $\rho$ goes to infinity.  This stands in stark contrast to the MRT beamforming strategy, where the associated AIL leads to a non-zero value regardless of how large the transmit SNR becomes. In other words, when there are sufficient resources, the AN beamforming can be effectively designed  such that the AIL to Eve can be reduced to any arbitrarily low level.}

\begin{remark}  Despite providing valuable insight into the system performance, we will see later that the high-SNR approximation, which leads to a constant channel dispersion, is only accurate when the transmit SNR is very high. Consequently, in order to avoid oversimplifying our conclusions, we need to carefully treat the impact of the SNR within the region of interest when designing the system.
\end{remark}

\section{Secure Transmission Design}\label{sec:problem}
\textcolor{black}{In this section, we turn our attention to the problem of optimizing key system parameters such as the blocklength $N$ for a fixed number of transmit information bits $m$ and power allocation factor $\alpha$ in the considered secure FBL communication system.} Using the definition of AIL explored in the previous section for different fading scenarios and beamforming schemes,  we define the instantaneous secrecy throughput (IST) for a given $\tilde{\gamma}_b[i]$ in bits per channel use (bpcu) as 
\begin{align}
    \mathcal{T}[i]&=[1-\varepsilon] \frac{m}{N[i]} \mathbbm{1}_{\bdelta[i] \leq \phi},
\end{align}
where $i\in\{1, 2, \cdots, L\}$ indicates the timeslot index of the FBL transmissions for different block-fading channels, and the indicator function $\mathbbm{1}_C$ is defined as
\begin{align}
    \mathbbm{1}_C = \begin{cases}
       1,& \text{if condition C is true}\\
        0,& \text{otherwise}
    \end{cases}
\end{align}
The critieron $\mathbbm{1}_{\bdelta[i] \leq \phi}$ ensures that the AIL does not exceed the predetermined threshold $\phi$, which represents the minimum acceptable secrecy level. We formulate our design problem  in terms of maximizing the average secrecy throughput (AST) under the given reliability and security requirements as
\begin{subequations}
\begin{align}\label{opt_prob_original}
\Prob{}:&~ \stackrel{}{\underset{\{\mathbf{N}, \pmb{\alpha}\}}{\mathrm{max}}}~~ \bar{\mathcal{T}} = \frac{1}{L}\sum^L_{i=1} \mathcal{T}[i]
\nonumber\\
&~~\text{s.t.}~~\cst{1}:~ N[i] \leq N^{max},~N[i]\in\mathbb{Z}^{+},~\forall i\\
&~\qquad~\cst{2}:~0 < \alpha[i] \leq 1,~\forall i
\end{align}
\end{subequations}
where \textcolor{black}{$\mathbf{N}=\{N[1], N[2], \cdots, N[L]\}$ and $\pmb{\alpha} = \{\alpha[1], \alpha[2], \cdots, \alpha[L]\}$ indicate optimization block variables},  \cst{1} and \cst{2} impose practical bounds on the design parameters, $N^{max}$ is the blocklength corresponding to the maximum tolerable delay of the FBL transmission per timeslot\footnote{Note that total transmission delay captures other latency parameters such as propagation delay and the encoding/decoding processing delay at the transceivers. For the sake of simplicity, we assume that all these delays are reflected in the maximum blocklength parameter.}. We emphasize that \Prob{} is generally a challenging stochastic optimization problem with a complicated objective function and mixed-integer variables. In what follows, we delve into tackling \Prob{} via both adaptive and non-adaptive designs.

\vspace{-3mm}  \subsection{Adaptive Design}
In the adaptive design, since Alice possesses complete knowledge of Bob's instantaneous CSI, we assume that she has the ability to adapt the parameter vectors $\mathbf{N}$ and $\pmb{\alpha}$ according to the instantaneous condition of the main channel $\tilde{\gamma}_b[i],~\forall i$. We equivalently divide \Prob{} into $L$ subproblems due to the i.i.d. channel assumption and solve them separately. The formulation for the $i$-th subproblem is expressed as
\begin{subequations}
\begin{align}
\Prob{1.i}:&~ \stackrel{}{\underset{\{N[i], \alpha[i]\}}{\mathrm{max}}}~~\frac{m(1-\varepsilon)}{N[i]}
\nonumber\\
&~~\text{s.t.}~~N[i] \leq N^{max},~N[i]\in\mathbb{Z}^{+},~\forall i\label{58a}\\
&~\qquad~0 < \alpha[i] \leq 1,~\forall i\\
&~\qquad~\bdelta[i] \leq \phi,~\forall i . \label{58c}
\end{align}
\end{subequations}
Note that  \Prob{1.i} is a non-convex mixed-integer optimization with non-convex constraints \eqref{58a} and \eqref{58c}, and can be solved by any heuristic search method. For example, we can commence the optimization by setting $N[i]=1$ owing to the fact that the objective function is decreasing w.r.t. $N[i]$, and then perform a one-dimensional (1D) search on the domain of $\alpha[i]$ until the constraint \eqref{58c}  holds.  If the constraint cannot be satisfied, we then sequentially increment $N[i]$ up to $N^{max}$, making the inequality constraints looser, and conduct the 1D search. If the inequality does not hold for any $N[i]$ and $\alpha[i]$, it implies that the problem is not feasible and the objective value is set to zero. Note that the collated optimal solutions to the $L$ subproblems \Prob{1.i} are equivalent to the solution to \Prob{1}. \textcolor{black}{This approach leads to a worst-case time complexity of $\mathcal{O}(LN^{max}S)$, where $S$ is the size of the search space.}

We also consider an alternative low-complexity approach by dividing \Prob{1} into two subproblems and solving them alternately, which enables us to obtain further insight into the design and reduce the complexity. In particular, for a given fixed feasible $\pmb{\alpha}$ and relaxing the integer optimization variables $\mathbf{N}$ into a continuous-valued quantity $\tilde{\mathbf{N}}$, we recast \Prob{1} approximately as 
\begin{subequations}
\begin{align}
\Prob{2.1}:&~ \stackrel{}{\underset{\tilde{\mathbf{N}}}{\mathrm{max}}}~~\frac{1}{L}\sum^L_{i=1}\frac{m(1-\varepsilon)}{\tilde{N}[i]}
\\
&~~\text{s.t.}~~\vartheta (\phi) \leq \tilde{N}[i] \leq N^{max},~\forall i \label{59a}
\end{align}
\end{subequations}
where $\vartheta(\phi;\alpha[i],\tilde{\gamma}_b[i])$ evaluates the inverse of $\bdelta(\tilde{N}[i];\alpha[i],\tilde{\gamma}_b[i])$ w.r.t. the parameter $\tilde{N}[i]$  such that $\bdelta(\vartheta_i(\phi;\alpha[i],\tilde{\gamma}_b[i]);\alpha[i],\tilde{\gamma}_b[i]) = \phi$. We note that the lower-bound in \eqref{59a} follows from the approximate AIL $\bdelta$ in Proposition \ref{prop1} and from the fact that $\bdelta$ is a decreasing function w.r.t. the blocklength $N$ since the CDF is monotonically increasing. Thus, the optimal solution to \Prob{2.1} exists only if  $\vartheta(\phi;\alpha[i],\tilde{\gamma}_b[i]) \leq N^{max},~\forall i$. 

\textcolor{black}{Due to the monotonically decreasing objective function, the optimal solution for the relaxed blocklength vector $\mathbf{\tilde{N}}^*=\{\tilde{N}^*[1], \tilde{N}^*[2], \cdots, \tilde{N}^*[L]\}$ is given by
\begin{align}
    \tilde{N}^*[i] = \vartheta(\phi;\alpha[i],\tilde{\gamma}_b[i]),~\forall i .
\end{align}}
Since  $\mathbf{\tilde{N}}^*$ can be considered to be a function of $\alpha[i]$ with the given $\tilde{\gamma}_b[i]$, i.e.,  $\vartheta(\alpha[i]; \phi,\tilde{\gamma}_b[i])$, we plug it into the objective function  of \Prob{} to obtain the optimal solution for $\pmb{\alpha}$, denoted by $\pmb{\alpha}^*$, and solve the following optimization problem:
\begin{subequations}
\begin{align}
\Prob{2.2}:&~ \stackrel{}{\underset{{\pmb{\alpha}}}{\mathrm{max}}}~~\psi({\pmb{\alpha}}) = \frac{1}{L}\sum^L_{i=1}\frac{m(1-\varepsilon)}{\vartheta(\alpha[i]; \phi,\tilde{\gamma}_b[i])}
\\
&~~\text{s.t.}~~0  < \alpha[i] \leq 1,~\forall i\textcolor{black}{.}
\end{align}
\end{subequations}
Note that \Prob{2.2} can be solved through a simple bisection search for 
$\nabla_{{\pmb{\alpha}}}\psi({\pmb{\alpha}}^*)= \frac{\partial \vartheta(\alpha[i]; \phi,\tilde{\gamma}_b[i])}{\partial \alpha[i]}\Big|_{\alpha[i]=\alpha^*[i]}=0~\forall i $ in the range $(0,1]$. \textcolor{black}{Once ${{\alpha}}^*[i]$ is found as a function of $\tilde{\gamma}_b[i]$ and the parameter $\phi$, denoted as ${{\alpha}}^*[i]=\tau_1(\tilde{\gamma}_b[i];\phi)$, the optimal solution for the integer-valued blocklength can be calculated as $\mathbf{N}^*=\lceil\vartheta(\tau_1(\tilde{\gamma}_b[i];\phi), \phi,\tilde{\gamma}_b[i])\rceil\treq\upsilon(\tilde{\gamma}_b[i];\phi)$. Note that both $\alpha^*[i]$ and $N^*[i]$ are functions of $\tilde{\gamma}_b[i],~\forall i$. Thus, the optimal AST is obtained as
\begin{align}\label{opt_adapt_T}
    \bar{\mathcal{T}}^* &= \frac{1}{L}\sum^L_{i=1} \frac{m(1-\varepsilon)}{\upsilon(\tilde{\gamma}_b[i];\phi)}.
\end{align}
When $L$ is sufficiently large, \eqref{opt_adapt_T} can be approximated as
\begin{align}\label{approxTbar}
    \bar{\mathcal{T}}^* \approx m(1-\varepsilon) \int^\infty_{\gamma_0} \frac{f_{\tilde{\gamma}_b}(x)}{\upsilon(x;\phi)} \diff x,
\end{align}
where $\gamma_0$ is the nonnegative solution to 
\begin{align}\label{gamma0_eq}
    \ln(1+\tau_1(x,\phi)x)= \sqrt{\frac{\big(1-\frac{1}{{(\tau_1(x,\phi)x+1)}^2}\big)}{\upsilon(x;\phi)}}\Qi(\varepsilon)+\frac{m \ln 2}{\upsilon(x;\phi)}.
\end{align}}
We stress that the approximation $\eqref{approxTbar}$ follows from the law of large numbers \cite{Papoulis2002} and the i.i.d. assumption of r.v.s $\tilde{\gamma}_b[i]~\forall i$. \textcolor{black}{Furthermore, the positive lower limit of the integration in \eqref{approxTbar} is to ensure that the integral is calculated over the feasible range of $\tilde{\gamma}_b$ such that
the constraint \eqref{58c} in the original subproblem \Prob{1.i} (i.e., $\bar{\delta} \leq \phi$) holds while guaranteeing validity of $\bar{\delta}$ in terms of $x_0$, a non-negative parameter of the approximate AIL in Proposition \ref{prop1} given by \eqref{x0_ineq}.
Otherwise, if $\tilde{\gamma}_b$ is too small, the constraint \eqref{58c} may not be satisfied or $x_0$ may become negative, rendering $\tilde{\gamma}_b \geq 0.5$, which is practically invalid. For $\tilde{\gamma}_b$ larger than the threshold $\gamma_0$, the required secrecy constraint \eqref{58c} is satisfied for the obtained ($N^*$, $\alpha^*$), and $x_0$ is positive. Hence, \eqref{gamma0_eq} indeed finds the minimum value of $x= \tilde{\gamma}_b$ over $\R^+$ characterized by $\gamma_0$, via finding the solution to $x_0(\alpha^*, N^*, x)\geq 0$  with equality.}

\vspace{-3mm}  \subsection{Non-Adaptive Design}
To reduce the complexity of improving the AST, Alice can take a non-adaptive design approach in which she adopts the same blocklength  $N$ and power allocation factor $\alpha$ for all the FBL transmissions based on statistical knowledge of $\tilde{\gamma}_b$.  In light of this, the AST optimization can be approximately expressed as
\begin{subequations}
\begin{align}\label{opt_prob_nonadaptive}
\Prob{3}:&~ \stackrel{}{\underset{\{N, \alpha\}}{\mathrm{max}}}~~\bar{\mathcal{T}}\approx\frac{m(1-\varepsilon)}{N}\E\{\mathbbm{1}_{\bdelta \leq \phi}\}
\\
&~~\text{s.t.}~\cst{1}:~  N\leq N^{max},~N\in\mathbb{Z}^{+}\\
&\qquad~\cst{2}:~0 < \alpha \leq 1,
\end{align}
\end{subequations}
where the approximate objective for \Prob{3} is obtained by applying the law of large numbers. 

\textcolor{black}{Note that the objective of \Prob{3} is not in closed form due to the expectation operator. 
To tackle this issue, we obtain a tractable formulation of the objective function. For ease of exposition, we replace ${R}_0$ with ${R}^\infty_0$, given in \eqref{Rinf}, in the formulation of $\bar{\delta}$ in Proposition \ref{prop1}, and conduct some algebraic manipulations to approximately reformulate \Prob{3} as
\begin{subequations}
\begin{align}\label{opt_prob_nonadaptive_1}
\Prob{4}:&~ \stackrel{}{\underset{\{N, \alpha\}}{\mathrm{max}}}~~\frac{m(1-\varepsilon)}{N}\Pr(\tilde{\gamma}_b \geq \gamma_1)
\\
&~~\text{s.t.}~\cst{1}:~  N\leq N^{max},~N\in\mathbb{Z}^{+}\\
&\qquad~\cst{2}:~0 < \alpha \leq 1,
\end{align}
\end{subequations}
where $\gamma_1$ is given by 
\begin{align}\label{on-off}
    \gamma_1= \frac{2^{\frac{\log_2(\e)}{\sqrt{N}}\Qi(\varepsilon)+\frac{m}{N}}(1+\Omega(1-\phi;\alpha))-1}{\alpha}
\end{align}
and $\Omega(x;\alpha)$ is the inverse CDF or quantile function of $F_{\gamma_e}(x)$, i.e., $F_{\gamma_e}(\Omega(x;\alpha))=x$. Since ${R}^\infty_0 > {R_0}$ and $\bdelta$ is an increasing function of ${R_0}$, we can verify that the objective value of \Prob{4} is a lower bound to that of \Prob{3}.}
We also stress that the first multiplicative term $\frac{m(1-\varepsilon)}{N}$ in the objective of \Prob{4} is non-increasing w.r.t. $N$, while the second term $\Pr(\tilde{\gamma}_b \geq \gamma_1)$ is non-decreasing. This implies an inherent trade-off in terms of the impact of blocklength on the AST, which reinforces the significance of the appropriate blocklength design. 

Since the probability in the objective function of \Prob{4} can be written analytically as $\Pr(\tilde{\gamma}_b \geq \gamma_1)  = 1-F_{\tilde{\gamma}_b}(\gamma_1)$, the analytical expression for $\bar{\mathcal{T}}$  over Rayleigh and Rician fading channels using \eqref{cdf_gammaB_Rayl} and \eqref{cdf_gammaB_Rice} are computed as
\begin{align}
   \bar{\mathcal{T}}_{Rayl} \approx \frac{m(1-\varepsilon)}{N} \left[ 1 -  \frac{\gamma(k,\frac{\gamma_1}{\bgamma_b})}{\Gamma(k)} \right],
   \end{align}
and
\begin{align}
\bar{\mathcal{T}}_{Rice} \approx \frac{m(1-\varepsilon)}{N} \mathbf{Q}_{k}\left(\sqrt{2kK_b}, \sqrt{\frac{2(K_b+1)\gamma_1}{\bgamma_b}} \right),
\end{align}
respectively. Although \Prob{4} has much less complexity than \Prob{3}, since it has only two optimization variables, obtaining a closed-form solution is still challenging due to the mixed-integer nonlinear nature of the optimization problem and the
complicated objective function. However, we can numerically solve \Prob{4} with any nonconvex optimization tool such as \cite{OptimizationToolbox}.

\section{Numerical Results and Discussion}\label{sec:simulations}
In this section, we provide simulation results to first validate the accuracy of our analytical expressions obtained for the AIL in Section \ref{sec:case} and investigate the impact of key system parameters such as blocklength, channel condition, transmit power, number of antennas, decoding error probability, and power allocation factor. Then, we present numerical results on AST enhancement by carefully designing the key system parameters, including the coding blocklength and the power allocation factor. Unless otherwise stated, the simulation parameters are set as follows: Number of transmit bits $m=100$, transmit SNR $\rho=0$ dB, decoding error probability $\varepsilon = 10^{-3}$, maximum AIL threshold $\phi=10^{-4}$, \textit{K-factor} for both the main and wiretap links $K_b=K_e \treq K=5$, number of transmit antennas $k=4$, transmit blocklength $N=400$, maximum blocklength $N^{max}=1000$, number of FBL transmissions $L=1000$, $\beta_b \|\h_b\|^2= 3$, and $\beta_e= 1$.

\vspace{-3mm}  \subsection{Verification of Theoretical Results}
\begin{figure*}
\centering
\begin{subfigure}{.5\textwidth}
  \centering
  \includegraphics[width=\linewidth]{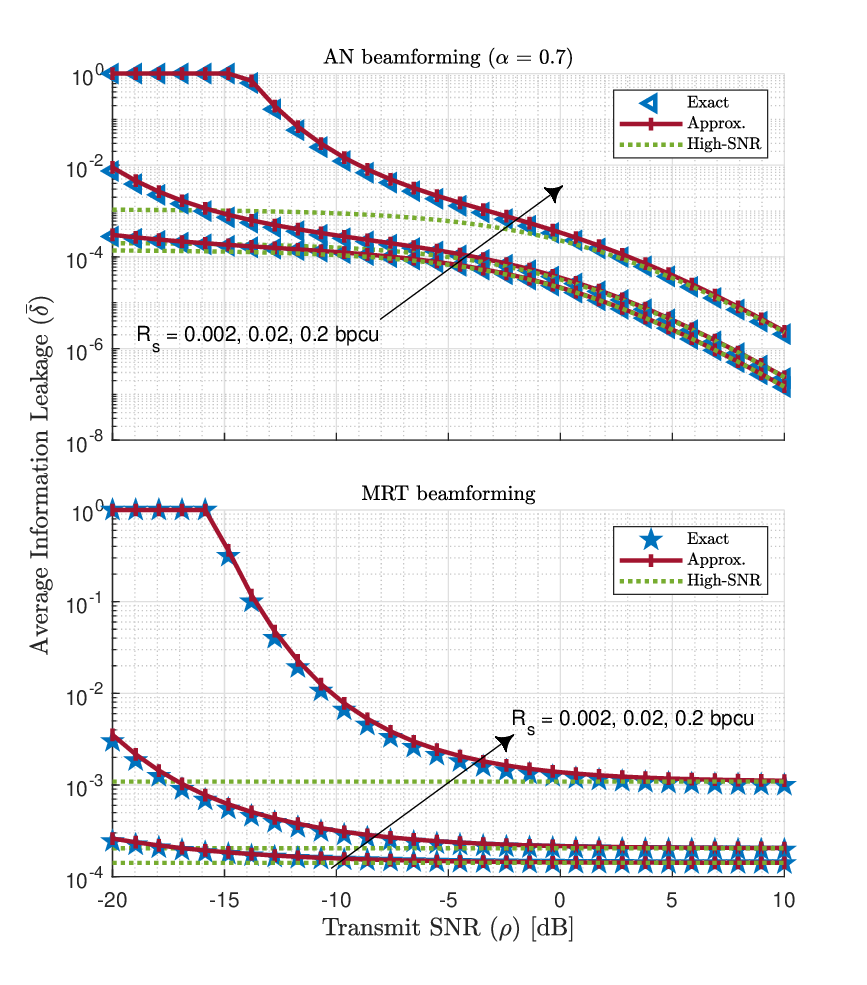}
  \caption{Rayleigh fading.}
  \label{sim1:rayl}
\end{subfigure}%
\begin{subfigure}{.5\textwidth}
  \centering
  \includegraphics[width=\linewidth]{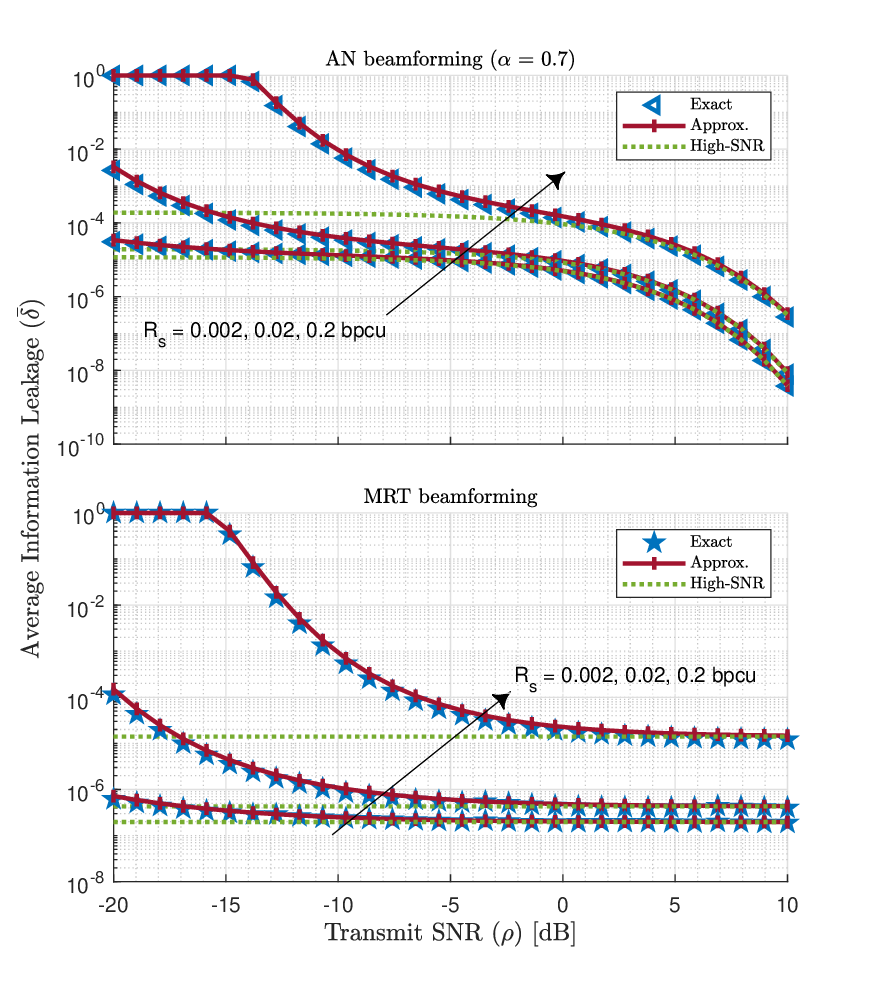}
  \caption{Rician fading.}
  \label{sim1:rice}
\end{subfigure}
\caption{AIL vs. transmit SNR for different fading channels and beamforming schemes. }
\label{sim1}
\end{figure*}
Fig. \ref{sim1} shows the AIL performance for both Rayleigh and Rician fading scenarios with AN and MRT beamforming strategies for a fixed main channel realization and different secrecy coding rates ranging from $R_s=0.002$ bpcu (relatively low) to $R_s=0.2$ bpcu (relatively high). The markers labeled by \textit{Exact} represent the exact AIL evaluation according to \eqref{delta_bar}. The curves labeled as \textit{Approx.} indicate the analytical approximate AIL in the FBL regime according to Proposition \ref{prop1}, and those labeled as \textit{High-SNR} depict the high-SNR evaluations of the AIL analyzed in Section \ref{section:highSNR}. We can see that the theoretical and approximate AILs are in good agreement for a wide range of transmit SNRs, validating our analysis.  Furthermore, for both Rayleigh and Rician fading, our theoretical high-SNR predictions provide accurate asymptotes for the performance for SNRs of 0dB or higher. 
For a given transmit SNR, we observe that the higher the secrecy rate $R_s$ (or equivalently the smaller blocklength since $m$ is fixed), the larger the AIL, regardless of the fading or beamforming scenarios. On the other hand, when the blocklength is fixed, one can achieve stringent AIL by increasing the SNR for AN beamforming. However, with MRT beamforming, we may not achieve the required AIL by solely increasing the transmit SNR, since the AIL approaches a non-zero value when $\rho$ becomes sufficiently large.
Overall, these figures confirm the accuracy of our approximate expressions derived for the AIL performance in Section \ref{sec:performance}, and once again reveal the inherent close link between the  AIL in the FBL regime and the widely known SOP for the IBL case.

\begin{figure}[t]
\centering
\includegraphics[width=\columnwidth]{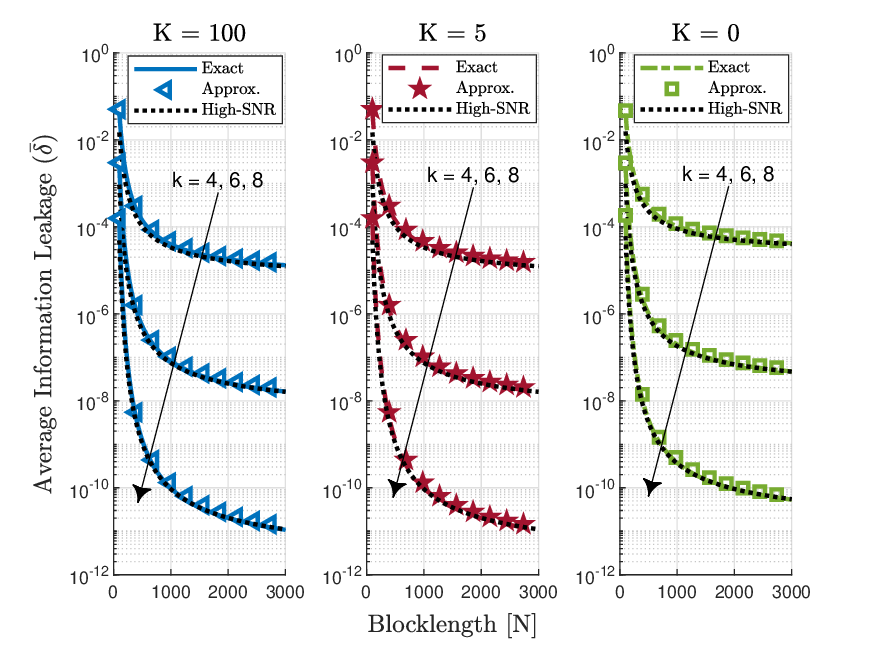}
\caption{AIL vs. blocklength for Rician fading with AN beamforming ($\alpha = 0.7$) for different K-factors and numbers of Alice's antennas. }
\label{sim2}
\end{figure}
In Fig. \ref{sim2}, we investigate the impact of the Rician \textit{K-factor} and the number of Alice's antennas on the AIL for Rayleigh and Rician fading with AN beamforming. Note that Rayleigh fading corresponds to the case of $K=0$, while large values of $K$ (e.g., $K \ge 20$) approximately correspond to a non-fading LoS channel. It is evident from the figures that the derived approximate and high-SNR expressions for the AIL are accurate for a broad range of channels. We see that the LoS-dominant channel ($K=20$ dB) has approximately identical AIL compared to Rician fading with $K=5$ for different values of blocklength, and both yield relatively smaller AIL values compared to Rayleigh fading for the same coding blocklength. In addition, we observe that the larger the number of antennas, the smaller the AIL, and one can obtain a reduction of $10^{-3}$ in AIL by adding only two extra transmit antennas. This demonstrates the significance of the number of antennas and the effectiveness of AN beamforming in boosting the secrecy performance of the FBL transmissions.

\begin{figure}[t]
\centering
\includegraphics[width=\columnwidth]{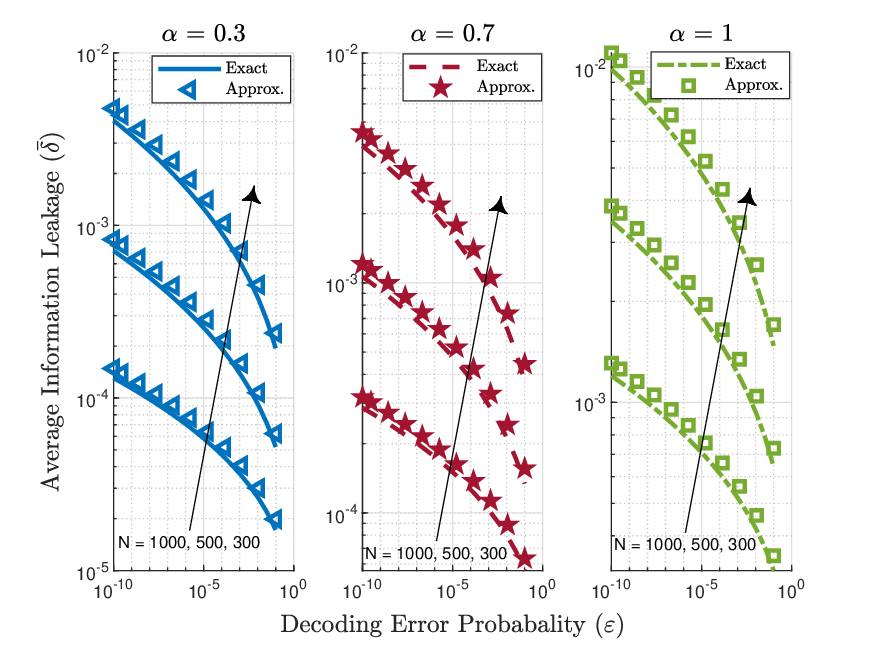}
\caption{AIL vs. decoding error probability for Rayleigh fading.}
\label{sim3}
\end{figure}

\begin{figure}[t]
\centering
\includegraphics[width=\columnwidth]{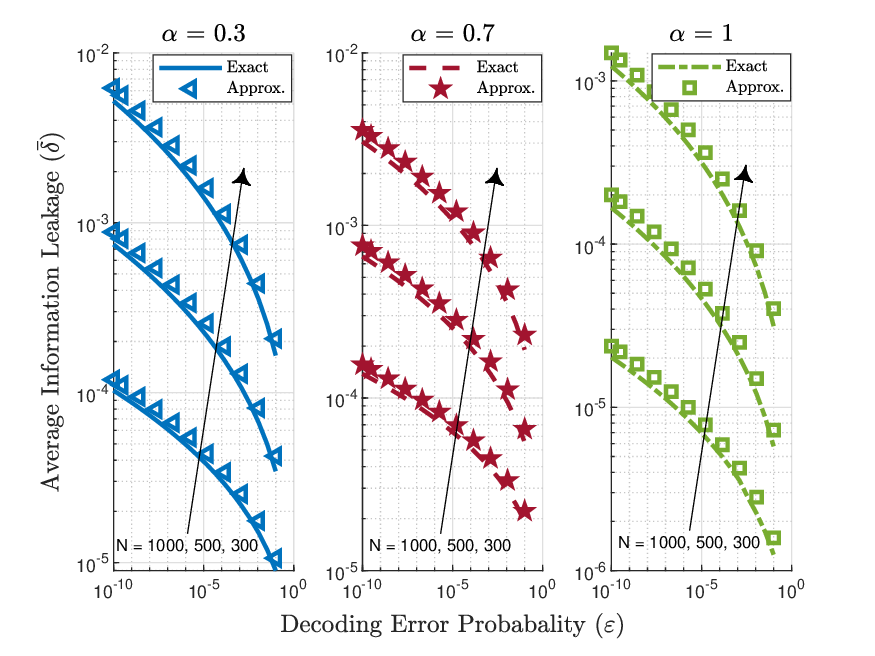}
\caption{AIL vs. decoding error probability for Rician fading.}
\label{sim4}
\end{figure}
Figs. \ref{sim3} and \ref{sim4} provide insights into the security-reliability trade-off for FBL transmissions under Rayleigh and Rician fading, respectively. In particular, we plot the AIL against the decoding error probability $\varepsilon$ for different power allocations and blocklengths. As can be clearly seen from the figures, when $\varepsilon$ increases, the AIL becomes smaller, and this trend becomes more significant as the coding blocklength is reduced. We can also conclude from the figures that a slight compromise in AIL leads to a large reliability improvement. For example, for the case of Rayleigh fading in Fig. \ref{sim3} with $\alpha=0.3$ and $N=300$, if the secrecy level is reduced from $\bdelta = 2\times10^{-4}$ to $\bdelta =10^{-3}$, a loss of $7$ dB, the communication reliability can be improved by 30dB, from $\varepsilon= 10^{-1}$ to $\varepsilon=10^{-4}$. \textcolor{black}{Furthermore, comparing Figs. \ref{sim3} and \ref{sim4}, we observe that Rician fading channels achieve comparably lower AIL than Rayleigh fading channels. In addition, as the signal power ratio $\alpha$ increases, the benefits of the LoS component in Rician fading scenarios become more pronounced in terms of helping reduce the AIL for a given decoding error probability, compared with Rayleigh fading. For example, with $N=500$ and $\alpha=1$, Rician fading leads to the AIL $\bar{\delta}=4\times10^{-5}$, while Rayleigh fading exhibits significantly larger AIL at $\bar{\delta}= 2 \times 10^{-3}$, which corresponds to an approximately 50-fold increase for the given decoding error probability $\varepsilon=10^{-5}$. }

\vspace{-3mm}  \subsection{Adaptive and Non-Adaptive Designs}
\begin{figure}[t]
\centering
\includegraphics[width=\columnwidth]{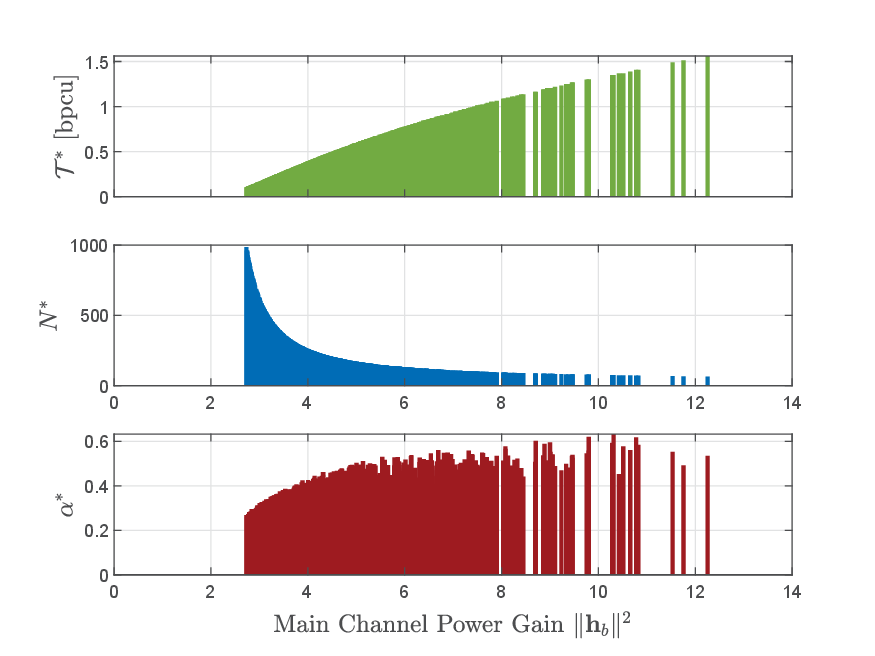}
\caption{Optimal IST, blocklength, and power allocation vs. the main channel power gain $\|\h_b\|^2$ under Rayleigh fading with AN beamforming via the adaptive design.}
\label{sim5}
\end{figure}
Fig. \ref{sim5} depicts the optimal IST for the adaptive design, as well as the optimal blocklength $N^*$ and power allocation $\alpha^*$ for $L=1000$ realizations of the main link $\|\h_b\|^2$ sorted in ascending order. It is clear that as the main channel quality improves, a smaller blocklength is required by the adaptive algorithm to significantly boost the IST, leading to an overall AST performance enhancement.  In addition, Fig. \ref{sim5} shows that a smaller portion of FBL transmit power should generally be dedicated to the confidential signals corresponding to smaller $\alpha$ when Bob experiences relatively low channel quality, e.g., between $\|\h_b\|^2=2.7$ and $\|\h_b\|^2=5.5$. Such a power allocation is required to confuse Eve with stronger AN, while satisfying the required AIL secrecy level.

\begin{figure}[t]
\centering
\includegraphics[width=\columnwidth]{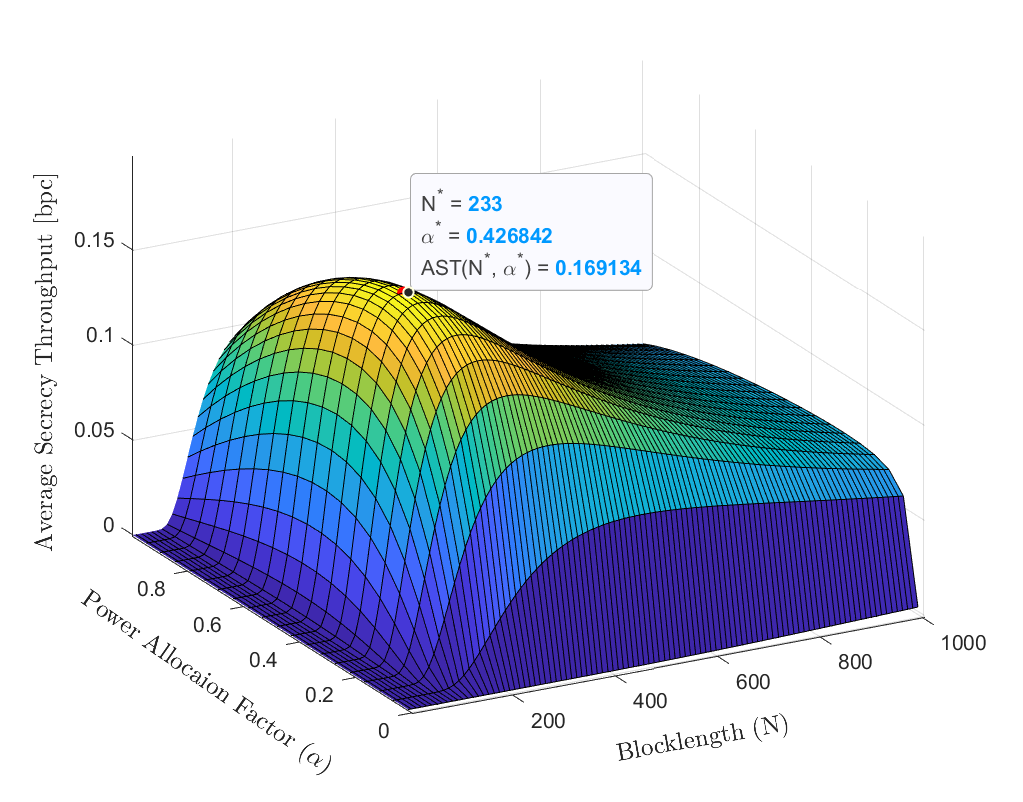}
\caption{AST vs. $\alpha$ and $N$ for the non-adaptive design in Rayleigh fading.}
\label{sim6}
\end{figure}
Fig. \ref{sim6} illustrates the performance of AST for the non-adaptive design under Rayleigh fading with AN beamforming for different blocklengths and power allocations. We see that the AST is jointly non-convex w.r.t. $N$ and $\alpha$ and peaks at $(N^* = 233, \alpha^*=0.43)$ for the given system setting. This can be explained as follows: When $N$ is small, the secrecy requirement plays an important role in significantly decreasing the AST. By increasing $N$ up to a certain threshold, the AST is improved due to the favorable trade-off between $N$ and $\Pr(\tilde{\gamma}_b \geq \gamma_1)$ in the AST given in \Prob{4}. When $N$ exceeds this threshold, the multiplicative term $\Pr(\tilde{\gamma}_b \geq \gamma_1)$ becomes less significant in determining the overall AST performance. Overall, this figure highlights the importance of proper blocklength design and power allocation.

\begin{figure}[t]
\centering
\includegraphics[width=\columnwidth]{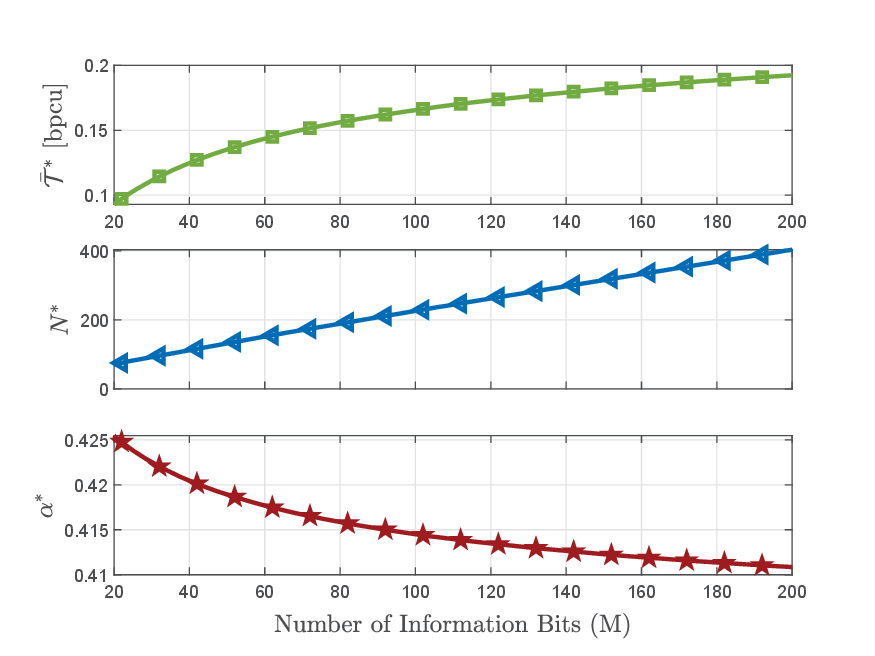}
\caption{Non-adaptive AST design under Rayleigh fading with AN beamforming against different numbers of confidential information bits ($m$). \textcolor{black}{The decoding error probability is set as $\varepsilon
= 10^{-3}$.}}
\label{sim7}
\end{figure}
Fig. \ref{sim7} exhibits the impact of the number of transmit information bits ($m$) on the designed parameters $N^*$ and $\alpha^*$  as well as the optimal AST $\bar{\mathcal{T}}^*$ using the non-adaptive strategy for Rayleigh fading. We observe that a linear increase in $m$ yields a linear increase in $N^*$, while $\alpha^*$ monotonically decreases to improve the AST and maintain the required secrecy level. Furthermore, the overall performance of $\bar{\mathcal{T}}^*$ is non-decreasing w.r.t. $m$; however, due to the maximum blocklength threshold $N^{max}$, a ceiling on the optimal AST is expected to appear when $m$ becomes sufficiently large.

\section{Conclusion and Future works}\label{sec:conclusion}
This study has explored the secrecy performance of FBL transmissions in terms of AIL for a multi-antenna wiretap communication system in the absence of Eve's instantaneous CSI. In particular, we have unveiled the relationship between the AIL in the FBL regime and the SOP for the IBL case and found that the latter is the saddle-point approximation of the former for a specific redundancy rate. We have further evaluated the AIL of the system with arbitrary precoding and fading channels and performed case studies to holistically investigate the impact of key system parameters on the AIL performance. Our findings suggest that increasing blocklength reduces the system AIL and sacrificing a small level of secrecy can lead to large gains in reliability. In this sense, AIL can be considered to be a measure of quality of security (QoSec) to be traded off with other measures of transmission quality \cite{chorti2022context}. We have also formulated and solved an AST optimization problem in terms of both blocklength and power allocation, obtaining important insights into the practical design of PLS with FBL transmissions. 
Future research will involve an exploration of the implications of FBL communication in meeting AIL requirements under imperfect CSI, while leveraging advanced optimization techniques such as deep reinforcement learning to enhance the efficiency of complex system designs.


\appendices
\numberwithin{equation}{section}
\makeatletter 
\newcommand{\section@cntformat}{Appendix \thesection:\ }
\makeatother

\section{\textcolor{black}{Proof of Proposition \ref{prop1}}}\label{AppendixA}
\textcolor{black}{To commence the proof, we rewrite \eqref{delta_bar} equivalently as
\begin{align}\label{delta_bar_CDF}
   \begin{split}
    \bar{\delta} =\int_{\R^+} &\left[1-F_{\gamma_e}(x)\right]\\
       &\times\frac{\partial }{\partial x}\Q\left(
       \stackrel{}{\underset{\text{g(x)}}{\underbrace{\sqrt{\frac{N}{V_e(x)}}\left[\log_2\left(\frac{1+\gamma_b}{1+x}\right)-{R_0}\right]}}}
          \right) dx,
   \end{split} 
\end{align}
where \eqref{delta_bar_CDF} is obtained using integration by parts. Now, keeping in mind that the first-order derivative of the Q-function is    $\frac{\partial \Q(x)}{\partial x} = -\frac{1}{\sqrt{2\pi}}\exp\left(-\frac{x^2}{2}\right)$, we apply the chain rule to obtain the derivative of the composite Q-function inside the integration as $\frac{\partial \Q}{\partial x} = f_1(x) f_2(x)$ where the functions $f_1$ and $f_2$ are defined over $0 <x \leq x_0$ as
\begin{align}
     f_1(x)&\treq{\frac{\partial \Q}{\partial g}}\Big|_{g(x)}\nonumber\\
     &=-\frac{1}{\sqrt{2\pi}}\exp\left(-\frac{N}{2V_e(x)}\left[\log_2\left(\frac{1-\gamma_b}{1+x}\right)-R_0\right]^2\right)
\end{align}
and
\begin{align}
    f_2(x)&\treq {\frac{\partial g}{\partial x}}\Big|_{x}\nonumber\\
    &=-\sqrt{\frac{N}{x(x+2)}}\left(1+\frac{\log_2\left(\frac{1+\gamma_b}{1+x}\right)-R_0}{x(x+2)\log_2\e}\right),
\end{align}
respectively. Accordingly, we can rewrite \eqref{delta_bar_CDF} as
\begin{align}\label{laplace}
      \bar{\delta} =  \int_{\R^+} \Psi(x) \e^{-N\Xi(x)} dx,
\end{align}
where the functions $\Psi(x)$ and $\Xi(x)$ are defined respectively as
\begin{align}\label{Psi_func}
    \Psi(x) = -\frac{1}{\sqrt{2\pi}}f_2(x)[1-F_{\gamma_e}(x)]
\end{align}
 and
\begin{align}\label{Xi_func}
    \Xi(x) = \frac{\left(\log_2\left(\frac{1+\gamma_b}{1+x}\right)-{R_0}\right)^2}{2V_e(x)}.
\end{align}
We note that $\Psi$ is a continuous function and always positive in its domain. Also, $\Xi$ is a twice-differentiable function. It is not difficult to show that the global minimum of $\Xi$ over $\R^+$ occurs at $x_0$ as $\Xi(x)$ is non-increasing with $x$. Since all the required conditions for the Laplace/saddle-point approximation method are satisfied, we can approximate \eqref{laplace} as
\begin{align} \label{delt1}
    \bar{\delta} &\approx \Psi(x_0) \e^{-N\Xi(x_0)} \sqrt{\frac{2\pi}{N\Xi''(x_0)}}.
\end{align}
We now determine the values of the functions in \eqref{delt1} to further simplify $\bar{\delta}$. By substituting $x_0$ into \eqref{Psi_func} and \eqref{Xi_func}, one can respectively obtain   
\begin{align}\label{psi_xi}
 \hspace{-3mm}   \Psi(x_0)=\sqrt{\frac{N}{2\pi x_0(x_0+2)}}[1-F_{\gamma_e}(x_0)]~\text{and}~\Xi(x_0)=0.
\end{align}
To calculate the second derivative of $\Xi$ at $x_0$, i.e., $\Xi^{''}(x_0)$ , we rewrite 
$\Xi(x) = \frac{(C_s(x)-R_0)^2}{2V_e(x)}$ where $C_s(x)=\log_2\left(\frac{1+\gamma_b}{1+x}\right)$ whose first derivative w.r.t. $x$ is $C'_s(x)=-\frac{\log_2 \e}{x+1}$. Then, we obtain $\Xi^{''}(x)$, after some tedious calculus, as
\begin{align}\label{Xi2d_func}
    \Xi^{''}(x)&=\frac{\left({V_e}(x) {C'_s}(x)-({C_s}(x)-{R_0}) {V'_e}(x)\right)^2}{{V_e}^3(x)}\nonumber\\
    &+\frac{{V_e}^2(x) ({C_s}(x)-{R_0}) C''_s(x)}{{V_e}^3(x)}\nonumber\\
    & - \frac{{V_e}(x) ({C_s}(x)-{R_0})^2 V''_e(x)}{2 {V_e}^3(x)}.
\end{align}
Substituting $x_0$ into \eqref{Xi2d_func} and using $C_s(x_0) = R_0$ leads to
\begin{align}\label{Xi2d_last}
    \Xi''(x_0) = \frac{C'^2_s(x_0)}{V_e(x_0)} = \frac{1}{x_0(x_0+2)}.
\end{align}
Therefore, using \eqref{psi_xi} and \eqref{Xi2d_last}, we can simplify \eqref{delt1} and obtain the approximate AIL expression given by \eqref{deltaApprox}, which completes the proof.}

\bibliographystyle{IEEEtran}
\bibliography{RefList}

\end{document}